\documentclass[a4paper,10pt]{amsart}
\usepackage[utf8]{inputenc}
\usepackage[T1]{fontenc}
\usepackage{graphicx} 
\usepackage{mathtools}
\usepackage{color}
\usepackage{enumitem}
\usepackage{cite}
\usepackage{url}
\usepackage{amsmath}
\usepackage{amssymb}
\usepackage{amsthm}
\usepackage{mathrsfs}
\usepackage{mathtools}

\newcommand{\ena}[1]{\mathrm{e}^{#1}}

\DeclareMathOperator{\supp}{\mathrm{supp}}
\newcommand{\dd}{\mathrm{d}}

\newcommand{\N}{\mathbb{N}}
\newcommand{\C}{\mathbb{C}}
\newcommand{\R}{\mathbb{R}}
\newcommand{\Z}{\mathbb{Z}}
\newcommand{\iu}{\mathrm{i}} %imaginary unit
\newcommand{\HH}{\mathscr{H}} % continuum Hilbert space 
\newcommand{\Hh}{\mathscr{H}_h} %discrete Hilbert space 
\newcommand{\hHh}{\hat{\mathscr{H}}_h} %and its F. image
\newcommand{\F}{\mathcal{F}}
\newcommand{\Fh}{\textup{F}_h}
\newcommand{\B}{\mathscr{B}}

\newcommand{\GG}{\mathscr{G}_\infty} % continuum Hilbert space a la Barker
\newcommand{\Gn}{\mathscr{G}_n} %discrete Hilbert space a a la Barker
\newcommand{\res}{\mathrm{res}} %inductive resolution
\newcommand{\Sub}{\mathscr{S}} % dense subspace of the continuum Hilbert space a la Barker
\newcommand{\NN}{\mathcal{N}} %directed set

\newcommand{\hn}{h_{n}}
\newcommand{\mass}{\text{\sffamily\itshape m}}

\newcommand{\wtlim}{\mathop{\widetilde{\lim}}}
\DeclareMathOperator*{\wlim}{w-lim}
\DeclareMathOperator*{\slim}{s-lim}
\DeclareMathOperator{\Dom}{Dom}

\renewcommand{\phi}{\varphi} 
\renewcommand{\epsilon}{\varepsilon}

\newtheorem{theorem}{Theorem}[section]
\newtheorem{lemma}[theorem]{Lemma}
\newtheorem{corollary}[theorem]{Corollary}
\newtheorem{proposition}[theorem]{Proposition}
\newtheorem{definition}[theorem]{Definition}
\theoremstyle{definition}
\newtheorem{remark}[theorem]{Remark}

\theoremstyle{definition}
\newtheorem{assumptions}{Assumptions}
\newtheorem{example}{Example}

\title{Continuum limit of discretized matrix-valued Fourier multipliers}

\date{July 23, 2026}
\begin{document}

\author[R. Karapetyan]{Ruben Karapetyan}
\address{Department of Computer Science, Faculty of Electrical Engineering\\
Czech Technical University in Prague \\
Karlovo náměstí 13, 120 00, Prague, Czechia \\
E-mail: {\tt karaprub@fjfi.cvut.cz}}

\author[M. Tu\v{s}ek]{Mat\v{e}j Tu\v{s}ek}
\address{Department of Mathematics, Faculty of Nuclear Sciences and Physical Engineering\\
Czech Technical University in Prague \\
Trojanova 13, 120 00, Prague, Czechia \\
E-mail: {\tt matej.tusek@fjfi.cvut.cz}}

\begin{abstract}
Building upon a recent result by H. Cornean, H. Garde, and A. Jensen concerning continuum limits of discrete Dirac operators, we extend the analysis to a wide class of block operator matrices. This class includes, among others, the bilayer graphene Hamiltonian. Our main goal is to find norm estimates for the difference between the resolvents of continuous operators and their discrete counterparts embedded in the continuum in a specific way. While some discretization schemes lead directly to convergence in the generalized norm resolvent sense as the mesh parameter tends to zero, others require the addition of a suitable correction term to ensure the convergence.
\end{abstract}

\maketitle

\section{Introduction}
Let $H_0$ be a linear operator on the Hilbert space $L^2(\R^m;\C^l)$ and $H_h$ its discrete analogue in the Hilbert space $\ell^2(h\Z^m;\C^l)$, where $h>0$ can be understood as a mesh-size parameter. Naively, one would expect that $H_h$ is somehow close to $H_0$ as $h$ tends to zero. Since $H_h$ and $H_0$ act on different Hilbert spaces, it is convenient to introduce the identification operators $J_h:\, \ell^2(h\Z^m;\C^l)\to L^2(\R^m;\C^l)$ and $K_h$ acting in the opposite direction  to inspect their closeness. Still, the domains of $J_h H_h K_h$ and $H_0$ may be different, so it seems reasonable to compare the resolvent difference 
\begin{equation} \label{eq:res_diff}
 (J_h H_h K_h-z)^{-1}-(H_0-z)^{-1}
\end{equation} 
at a common point $z$ of the resolvent sets. If \eqref{eq:res_diff} converges to zero in sufficiently strong operator topology, one can conclude that the spectrum of $H_h$ or a part of it converges to that of $H_0$.

To our best knowledge, Nakamura and Tadano obtained the first results for the uniform convergence of \eqref{eq:res_diff}. Namely, they dealt with the case when $H_0=-\Delta+V$, where $V$ belongs to a large class of potentials \cite{NaTa_21}. Their results were then applied to quantum graph Hamiltonians that can be related to discrete Schr\"{o}dinger operators on a lattice \cite{ExNaTa_22}. A large class of non-negative Fourier multipliers was considered by Cornean, Garde, and Jensen in \cite{CoGaJe_21}. The same authors then studied the Dirac operators in various dimensions as prominent examples of operators that are unbounded from both sides \cite{CoGaJe_23}. They observed that, except for the one-dimensional setting and a particular choice of the discretization (backward/forward differences), one always has to add a correction term to $H_h$ in order to get the uniform convergence of \eqref{eq:res_diff}. Without the correction term, \eqref{eq:res_diff}  converges to zero only strongly. 
The latter was observed also by Schmidt and Umeda \cite{SchUm_23,SchUm_25}, who used a different  identification operator $J_h$ than the above mentioned authors. The issue with discretizing the Dirac operator is known in lattice quantum field theory as the fermion doubling problem \cite[Chap.~4]{Ro_12}.
See \cite{Na_24} for a mathematical treatise on how to overcome it. The case when $H_0$ is an elliptic operator was studied in \cite{MiNaTa_24}. The Dirac--Hodge operator and its discrete version were considered in \cite{MiPa_23}. Remarkably, no correction term was needed in the latter setting; one can refer to Subsection \ref{sec:correction} for a possible reasoning. Furthermore, employing odd or even reflections, one can translate the convergence  results for the Laplacian on the whole Euclidean space to the corresponding results for the Dirichlet or Neumann Laplacian on a half-space \cite{CoGaJe_23II}. 

Since the relation between a continuous model and its discrete version is a fundamental question in mathematical modelling and numerical mathematics, and consequently there exists an endless landscape of literature, we have restricted the overview above to the very setting of the current paper and we will add just a few closely related results to give readers a taste of the directions they can take. Strong resolvent convergence of \eqref{eq:res_diff} to zero for Schr\"{o}dinger operators on different lattices was considered in \cite{IsJe_22}. A class of non-linear Schr\"{o}dinger equations was approximated by their discrete analogues using a finite element method in \cite{BaPe_10}. Finally, lattice models and their \emph{exact} continuous counterparts were reviewed and studied in \cite{FuKr_26}.

The main aim of the present paper is to show the uniform convergence of \eqref{eq:res_diff} to zero for a large class of operator matrices $H_0$ of the form
\begin{equation*}
H_0=\F^*G_0\F \quad \text{with} \quad G_0=G_0(\xi):=\begin{pmatrix}
 G_{11}(\xi) & G_{12}(\xi)\\
 G_{12}(\xi)^* & -G_{11}(\xi)
 \end{pmatrix},
\end{equation*}
where $\F$ denotes the Fourier transform and $G_{11}$ and $G_{12}$ are $d\times d$ matrix valued functions on $\R^m$.
This class includes the Dirac operators, the bilayer graphene Hamiltonian, many other operators with super-symmetry, and effectively also all the scalar operators considered in \cite{CoGaJe_21} (see Subsection \ref{sec:examples}).  We will work with two variants of discrete operators; both of them of the form
\begin{equation*}
H_h=\Fh^*\textup{G}_h\Fh \quad \text{with} \quad \textup{G}_h=\textup{G}_h(\xi):=G_0(f_h(\xi_1),\ldots, f_h(\xi_m))
\end{equation*}
for $\xi\in(-\pi/h,\pi/h)^m$. Here, $\Fh$ stands for the discrete Fourier transform and $f$ is such that $\Fh^*f_h\Fh=-\iu \textup{d}_h$ (in fact, for the second variant, only $(\Fh^*f_h\Fh)^2=(-\iu\textup{d}_h)^2$ holds), where $\textup{d}_h$ is a sort of a discrete version of the first derivative. This choice is motivated by the fact that $\F^*\xi_j \F=-\iu\partial_{x_j}$.

For $f_h(\xi_j)=\sin(h\xi_j)/h$, we will show that \eqref{eq:res_diff} converges to zero only strongly and that  we have to add a correction term to $H_h$ to ensure the uniform convergence. Note that, in this case, $\iu \Fh^*f_h\Fh$  is just the symmetric difference, which is second order accurate but may suffer from the grid decoupling (checker-board effect). This unwanted effect can be cured by working with staggered grids, which has been done for the Dirac operators in \cite[Section 5]{Na_24}. 

The other variant of $f_h$ is derived from the square-root of the standard second order difference, i.e.
 $$f_h(\xi_j)=\frac{2}{h}\Big|\sin\Big(\frac{h}{2}\xi_j\Big)\Big|.$$
Since now $\lim_{h\to 0+}f_h(\xi_j)=|\xi_j|$, this choice is reasonable only when
\begin{equation} \label{eq:G_0_symm}
G_0(\xi)=G_0(|\xi_1|,\ldots,|\xi_m|).
\end{equation}
If this is the case, we will prove that \eqref{eq:res_diff} converges uniformly to zero. Let us emphasize that $\Fh^*f_h\Fh$ is now a non-local operator. 

The purpose of the present paper is two-fold. First, as described above, we extend the existing results in several ways. For example, in comparison to \cite{CoGaJe_21} we can deal with scalar multipliers  that do not obey \eqref{eq:G_0_symm}. However, then we have to use  a different discretization scheme (derived from the symmetric differences) which exhibits poorer convergence properties. Still, we solve this issue by finding proper corrections. Next, concerning operator valued matrices, only the case of the Dirac operators has been studied so far.  Our class is much wider. Moreover, although the symbols of the Dirac operators do not obey \eqref{eq:G_0_symm}, many other symbols do. We can then apply the second discretization scheme (derived from the square root of the standard second differences), thereby obtaining a uniform convergence result without the need for any correction. Second, since we can adopt many auxiliary statements almost literally from the previous works, we can focus on the fundamental ideas in the proofs and dedicate more space to generalizations and making new connections. Therefore, our paper is written, in part, in an expository style. In particular, we noticed that the uniform convergence of \eqref{eq:res_diff} is an instance of the so-called \emph{generalized norm resolvent convergence} and  \emph{QUE-generalized norm resolvent convergence}. The first concept was developed by Weidmann and later generalized by Post, who also introduced the latter concept; see Section \ref{sec:convergence} for an overview and related references. In our setting, both concepts are equivalent, including the convergence speed, and yield convergence results for spectra and operator functions.
We further relate these notions to Barker's concept of convergence  \cite{Ba_01, Ba2_01, Ba3_01} and show that the choice of identification operators from \cite{CoGaJe_21, CoGaJe_23} fits into Barker's framework.
Finally, for these identification operators, we will prove that, in the self-adjoint setting, the generalized \emph{strong} resolvent convergence is equivalent to convergence of the resolvents in the sense of Barker.

The paper is organized as follows. Section \ref{sec:preliminaries} is of preliminary nature, and beside fixing the notation and conventions, its main purpose is to introduce the operators $J_h$ and $K_h$. In Section \ref{sec:main}, we will formulate and prove our main results, i.e. Theorems \ref{theo:main} and \ref{theo:strong_conv}. Various concepts of convergence of linear operators in varying Hilbert spaces are reviewed and compared in Section \ref{sec:convergence}.

\section{Preliminaries} \label{sec:preliminaries}

\subsection{Notation}
We will denote the norm of a Banach space $\mathscr X$ by $\|.\|_{\mathscr{X}}$, but when there is no risk of confusion we will write just $\|.\|$. If $\mathscr{X}$ is $\C^l$ or $\R^l$ with the euclidean norm, we will always write $|.|$ instead $\|.\|_{\mathscr{X}}$. The associated scalar product will be denoted by a dot. If $k\equiv(k_1,\ldots,k_m)\in\Z^m$ is a multi-index  then $|k|:=\max\{|k_1|,\ldots,|k_m|\}$.  For the identity operator on $\mathscr{X}$, we will use the symbol $I_{\mathscr{X}}$. If $A$ is a linear operator on $\mathscr{X}$ and $\lambda$ belongs to its resolvent set, we will abbreviate  $(A-\lambda I_\mathscr{X})^{-1}$ to $(A-\lambda)^{-1}$. Letters $C$ and $K$ will be reserved for positive constants whose values may differ from inequality to inequality or from line to line.

\subsection{Fourier transforms}
Given $m,l\in\N$, and $h>0$, we will write
\begin{eqnarray*}
&\HH^l:=L^2(\R^m,\dd x;\C^l),\, \hat\HH^l:=L^2(\R^m,\dd\xi;\C^l),\\ &\Hh^l:=\ell^2(h\Z^m;\C^l),\, \hHh^l:=L^2((-\frac{\pi}{h},\frac{\pi}{h})^m,\dd\xi;\C^l).
\end{eqnarray*}
with the convention that for $l=1$ we will omit the upper index. We will always use the following norm on $\Hh^l$,
\begin{equation*}
\|u\|_{\Hh^l}:=\Big(h^m\sum_{k\in\Z^m}|u_k|^2\Big)^{1/2}.
\end{equation*}
The Fourier--Plancherel transform $\F:\HH\to\hat\HH$ is defined as usual, i.e.,
\begin{equation*}
        (\F f)(\xi) \equiv \hat{f}(\xi)=
        \frac{1}{(2\pi)^{m/2}}
        \int_{\R^m} \ena{-\iu x \cdot \xi } f(x) \dd x.
\end{equation*}
For vector valued functions, we extend the above definition component-wise and denote the transform by the same letter $\F:\HH^l\to\hat\HH^l$. The (unitary) discrete Fourier transform $\Fh:\Hh\to\hHh$ is assumed to acts as follows
\begin{equation*}
(\Fh u) (\xi)
        = 
        \frac{h^m}{(2\pi)^{m/2}}
        \sum_{k \in \Z^m}
        u_k \ena{-\iu h k \cdot \xi}.
\end{equation*}
Its adjoint is given by
\begin{equation*}
(\Fh^* g)_k = 
        \frac{1}{(2\pi)^{m/2}}
        \int_{(-\frac{\pi}{h}, \frac{\pi}{h} )^m}
        \ena{\iu hk\cdot \xi} g(\xi) \dd\xi.
\end{equation*}
Again, we extend naturally these transforms to $\Fh:\Hh^l\to\hHh^l$ and $\Fh^*:\hHh^l\to\Hh^l$ by taking their action component-wise.

\subsection{Difference operators}
The forward, backward, and symmetric difference operators in the $j$-th direction will be denoted by $d_{h,j}^+,\, d_{h,j}^-,$ and $d_{h,j}$, respectively. Recall that they are bounded operators in $\Hh$ defined as follows 
\begin{gather*}
(d_{h,j}^+u)_k:=\frac{u(k+e_j)-u(k)}{h},\quad (d_{h,j}^-u)_k:=\frac{u(k)-u(k-e_j)}{h}\\
(d_{h,j}u)_k:=\Big(\frac{1}{2}(d_{h,j}^+ + d_{h,j}^-)u\Big)_k=\frac{u(k+e_j)-u(k-e_j)}{2h},
\end{gather*}
where $e_j$ stands for the $j$-th vector of the standard basis of $\R^m$. Their Fourier symbols are
\begin{equation*}
\frac{\ena{\iu h\xi_j}-1}{h},\quad \frac{1-\ena{-\iu h\xi_j}}{h},\quad \text{and} \quad  \frac{\iu\sin(h\xi_j)}{h},
\end{equation*}
respectively. In the one-dimensional setting, we will omit the subindex $j$ in the definitions of difference operators. The standard second difference operator is defined as 
$d_{h,j}^+d_{h,j}^-=d_{h,j}^-d_{h,j}^+$ (which is different from $d_{h,j}^2$)  and the standard discretized Laplacian as
\begin{equation} \label{eq:disc_laplacian}
\Delta_h=\sum_{j=1}^m d_{h,j}^+d_{h,j}^-.
\end{equation}
The Fourier symbol of the standard second difference operator is given by 
\begin{equation} \label{eq:std_2nd_diff}
-\Big(\frac{2}{h}\sin\Big(\frac{h}{2}\xi_j\Big)\Big)^2.
\end{equation}

\subsection{Embedding and discretization operators} \label{subsec:JK_operators}
To compare operators on the "discrete space" $\Hh^l$ to operators on the "continuum space" $\HH^l$ we will use the embedding and discretization operators defined in the same manner as in \cite{CoGaJe_21} and denoted by $J_h$ and $K_h$, respectively. Therein, $l=1$;  for vector-valued sequences and functions, we will let $J_h$ and $K_h$ act component-wise and, with a slight abuse of notation, denote the resulting operators by the same letters. 

First, note that there exist \emph{generating functions} $\phi_0,\,\psi_0\in\HH$ such that for every $h>0$ the sequences $(h^{-m/2}\phi_{h,k})_{k\in\Z^m}$ and $(h^{-m/2}\psi_{h,k})_{k\in\Z^m}$, where
\begin{equation}  \label{eq:generating_functions}
    \phi_{h,k} (x) =
    \phi_0 ( (x - h k)/h), \quad
     \psi_{h,k} (x)  = \psi_0 ( (x - h k)/h)\quad (\forall k\in\Z^m),
\end{equation}
are biorthogonal and have the Riesz property with $h$-independent Riesz constants, i.e.,
\begin{eqnarray*}
&h^{-m}\langle\phi_{h,k},\psi_{h,n}\rangle_{\HH}=\delta_{k,n},\\
&A_\phi\sum_{k\in\Z^m}|c_k|^2\leq \|\sum_{k\in\Z^m}c_kh^{-m/2}\phi_{h,k}\|^2\leq B_\phi\sum_{k\in\Z^m}|c_k|^2,\\
&A_\psi\sum_{k\in\Z^m}|c_k|^2\leq \|\sum_{k\in\Z^m}c_kh^{-m/2}\psi_{h,k}\|^2\leq B_\psi\sum_{k\in\Z^m}|c_k|^2,
\end{eqnarray*} 
for all $(c_k)_{k\in\Z^m}\in \ell^2(\Z^m)$ with $h$-independent $A_\phi,B_\phi,A_\psi,B_\psi>0$. Moreover, it is possible  to choose $\phi_0, \psi_0$ essentially bounded and such that 
\begin{equation} \label{eq:supports_ass}
\begin{aligned}
    &
    \text{supp} (  \hat{\phi}_0 ) \subset [ - 3 \pi /2 , 3 \pi /2]^m  \text{ and } | \hat{\phi}_0(\xi) | \geq C_0 \ \ \forall \xi \in [ -  \pi /2 , \pi /2]^m,
    \\
    &
    \text{supp} ( \hat{\psi}_0 ) \subset [ - 3 \pi /2 , 3 \pi /2]^m  \text{ and } | \hat{\psi}_0(\xi) | \geq C_0 \  \ \forall \xi \in [ -  \pi /2 , \pi /2]^m,
\end{aligned}
\end{equation}
cf. \cite[Section 2]{CoGaJe_21}.

Next, given any pair $\phi_0,\psi_0$ of functions with the properties above, we define
the \emph{embedding operator} $J_h : \Hh^l \to \HH^l$ as
\begin{equation*} 
    (J_h u) (x) := \sum_{k \in \Z^m} \phi_{h, k} (x) u_k \quad (\forall x\in\R^m)
\end{equation*}
and the \emph{discretization operator} $K_h : \HH^l \to \Hh^l$  as
\begin{equation*}
    (K_h f)_{k} := h^{-m} \langle \psi_{h,k} , f  \rangle_{\HH} \quad (\forall k \in \Z^m),
\end{equation*} 
where $\langle \psi_{h,k} , f  \rangle_{\HH}\in\C^l$ should be understood as the inner products of $\psi_{h,k}$ with the components of $f$. The Riesz property implies that
\begin{equation*} 
    % \label{eq:norm_embedding}
    \sup_{h>0} \|J_h\| \leq \sqrt{B_{\phi}} <+\infty.
\end{equation*}
Moreover, since, $K_h=(\tilde J_h)^*$, where 
\begin{equation*}  
    (\tilde J_h u) (x) := \sum_{k \in \Z^m} \psi_{h, k} (x) u_k \quad (\forall x\in\R^m),
\end{equation*}
we also have 
\begin{equation*}
    \sup_{h>0} \|K_h\| \leq \sqrt{B_{\psi}} < +\infty.
\end{equation*}
Due to biorthogonality, $K_h J_h = I_{\Hh^l}$. On the other hand, 
 \begin{align*}
     J_h K_h =  
     \sum_{k \in \Z^m} \frac{1}{h^m} \langle \psi_{h,k} , .\, \rangle \varphi_{h,k}
 \end{align*}
is a (not necessarily orthogonal) projection into $\HH^l$, which is never the identity, because $K_h$ is not injective.

Finally, in Section \ref{sec:convergence}, the special case when $\phi_0=\psi_0$ will be of interest. Then the sequence $(h^{-m/2}\phi_{h,k})_{k\in\Z^m}$ is orthonormal in $\HH$ (but it never forms a total set). Consequently, $A_\phi=B_\phi=1$ and $J_h$ is an isometry, in particular, $\|J_h\|=1$. Moreover, $\tilde J_h=J_h$, and therefore, $K_h=(J_h)^*$, which yields $\|K_h\|=1$. Finally, $J_hK_h$ is an orthogonal projection. 
Below, we provide a simple example of such a generating function.

\begin{example}
Put $\phi_0(x_1,\ldots,x_m)=\psi_0(x_1,\ldots,x_m)=\Pi_{j=1}^m f(x_j)$, where
$$f(x_j):=\frac{\sin(\pi x_j)}{\pi x_j}$$
is Whittaker's cardinal function, see \cite[Section 1.10]{Bowles}.
It is straightforward to check that $\phi_0$ is bounded and $(h^{-m/2}\phi_{h,k})_{k\in\Z^m}$ is an orthonormal set in $\HH$. Moreover, one has $\hat{\phi}_0(\xi)=(2\pi)^{-m/2}\chi_{(-\pi,\pi)^m}(\xi)$, where $\chi_A$ is the indicator function of a set $A$. Therefore, conditions \eqref{eq:supports_ass} are satisfied. Note that $\supp(\hat{\phi}_0)=[-\pi,\pi]^m$ is the smallest possible in the sense that with a smaller support the condition (2.8) from \cite{CoGaJe_21}, which is equivalent to the orthonormality, cannot be satisfied.  
\end{example}

\section{Convergence of discretized matrix-valued Fourier multipliers} \label{sec:main}

\begin{assumptions} \label{ass:1}
Let $G_{11}$ and $G_{12}$ be continuously differentiable $d\times d$ matrix-valued functions in $\R^m$ such that the following conditions are satisfied.
\begin{enumerate}[label=(\roman*)]
\item $(\forall\xi\in\R^m) (G_{11}(\xi)=G_{11}(\xi)^*)$
\item $(\forall\xi\in\R^m)(G_{12}(\xi)\text{ is normal})$
\item $(\forall\xi\in\R^m)([G_{11}(\xi),G_{12}(\xi)]=0)$
\item $(\exists C,K,\gamma>0)(\forall\xi\in\R^m:\, |\xi|>K)(G_{11}(\xi)^2+G_{12}(\xi)G_{12}(\xi)^*\geq C|\xi|^{2\gamma}I_{\C^d})$
\item $(\exists C>0,\beta\in\R)(\forall \star\in\{11,12\},\,i,j\in\{1,\ldots,d\},\, \xi\in\R^m)(|\nabla(G_{\star}(\xi))_{ij}|\leq C|\xi|^\beta)$.
\end{enumerate}
\end{assumptions}

\begin{remark}
Using 
\begin{equation*}
(G_\star(\xi))_{ij}=(G_\star(0))_{ij}+\int_0^1\xi\cdot\nabla(G_\star(t\xi))_{ij} \dd t
\end{equation*}
we conclude from (v) that every matrix element of $G_{11}(\xi)^2+G_{12}(\xi)G_{12}(\xi)^*$ is bounded by $C+K|\xi|^{2(1+\beta)}$. Consequently, we get
\begin{equation*}
G_{11}(\xi)^2+G_{12}(\xi)G_{12}(\xi)^*\leq (C+K|\xi|^{2(1+\beta)}) I_{\C^d}.
\end{equation*}
Combined with (iv) this yields the following necessary condition 
\begin{equation} \label{eq:beta_restriction}
\gamma\leq 1+\beta.
\end{equation}
In particular, since $\gamma>0$, we always have $\beta>-1$. Later, we will use the following immediate consequence of \eqref{eq:beta_restriction},
\begin{equation} \label{eq:beta_restriction2}
\gamma\geq 2\gamma-\max\{\beta,0\}-1.
\end{equation}
\end{remark}

\begin{remark}
The condition (ii) may be dropped, if one assumes that the lower bound (iv) holds true also for $G_{11}(\xi)^2+G_{12}(\xi)^*G_{12}(\xi)$. All the proofs in this section then work after an obvious little modification.
\end{remark}

Due to (i), the $2d\times 2d$ matrix valued function
\begin{equation*}
 G_0=G_0(\xi):=\begin{pmatrix}
 G_{11}(\xi) & G_{12}(\xi)\\
 G_{12}(\xi)^* & -G_{11}(\xi)
 \end{pmatrix}
\end{equation*}
is hermitian at every $\xi\in\R^m$. Moreover, on $\R^m$,
\begin{equation*}
G_0^2
=\begin{pmatrix}
G_{11}^2+G_{12}G_{12}^* & [G_{11},G_{12}]\\
[G_{12}^*,G_{11}] & G_{11}^2+G_{12}^*G_{12}
\end{pmatrix}
=\begin{pmatrix}
G_{11}^2+G_{12}G_{12}^* & 0\\
0 & G_{11}^2+G_{12}G_{12}^*
\end{pmatrix},
\end{equation*}
because of (ii ), (iii), and the fact that (iii) combined with (i) implies $[G_{12}^*,G_{11}]=0$ on $\R^m$. Using this observation together with (iv), we get
\begin{equation} \label{eq:G_0_bound}
(G_0(\xi)^2+1)^{-1}\leq(1+C|\xi|^{2\gamma})^{-1}I_{\C^{2d}}\quad (\forall\xi\in\R^m:\, |\xi|>K).
\end{equation}
Since $(G_0(\xi)^2+1)^{-1}\leq I_{\C^{2d}}$ for all $\xi\in\R^m$, there exists $C>0$ such that
\begin{equation} \label{eq:G_0_res_est}
(G_0(\xi)^2+1)^{-1}\leq C(1+|\xi|^{2\gamma})^{-1}I_{\C^{2d}}\quad (\forall\xi\in\R^m)
\end{equation}

With the symbol $G_0$ we associate the Fourier multiplier
\begin{equation} \label{eq:Hh0_def}
H_0:=\F^*G_0\F,
\end{equation}
where $G_0$ is understood as the multiplication operator  on the maximal domain $\Dom{G_0}=\{f\in\hat\HH^{2d}:\, \xi\mapsto G_0(\xi)f(\xi)\in\hat\HH^{2d}\}$. Since $G_0(\xi)$ is hermitian at every $\xi\in\R^m$, the operator $G_0$ is self-adjoint in $\hat\HH^{2d}$. Consequently,  $H_0$ is self-adjoint in $\HH^{2d}$. For the operator norm of its resolvent we have
\begin{multline*}
\|(H_0\pm\iu)^{-1}\|^2=\|(G_0\pm i)^{-1}\|^2\\
=\sup_{\xi\in\R^m}\|(G_0(\xi)\pm i)^{-1}\|^2_{\B(\C^{2d})}=\sup_{\xi\in\R^m}\|(G_0(\xi)^2+1)^{-1}\|_{\B(\C^{2d})},
\end{multline*}
where we used the $C^*$-identity in $\B(\C^{2d})$, due to which for every $A=A^*\in\B(\C^{2d})$, 
\begin{equation} \label{eq:Cstar}
\|(A\pm\iu)^{-1}\|^2_{\B(\C^{2d})}=\|(A^2+1)^{-1}\|_{\B(\C^{2d})}.
\end{equation}

Now, let us introduce two discretized versions of the symbol $G_0$. The first one is motivated by the fact that the symmetric difference in $j$-th variable acts as the multiplication by $\iu\sin(h\xi_j)/h$ in $\hHh$. Therefore, we put
\begin{equation*} 
G_h(\xi):=G_0\Big(\frac{\sin(h\xi_1)}{h},\ldots, \frac{\sin(h\xi_m)}{h}\Big).
\end{equation*}
We will denote the blocks of $G_h$ by $G_{\star,h},\, \star\in\{11,12\}$.
The second version is derived from the functional calculus for the minus Laplacian and its standard discretized version, cf. \cite{CoGaJe_21}, where the same discretization was used, for a more detailed explanation. Since the second derivatives in the standard discrete Laplacian act as \eqref{eq:std_2nd_diff} in the frequency domain, it seems reasonable to substitute $\frac{2}{h}|\sin(h\xi_j/2)|$ for $\xi_j$ in the arguments of $G_0$. However, as has been explained in Introduction, this may lead to a good approximation of $G_0$ only when the latter obeys
\begin{equation} \label{eq:Gsymm}
G_0(\xi)=G_0(|\xi_1|,\ldots,|\xi_m|).
\end{equation}
If this is the case, we put
\begin{equation} \label{eq:tildeGh}
\tilde G_h(\xi):=G_0\Big(\frac{2}{h}\sin\Big(\frac{h}{2}\xi_1\Big),\ldots, \frac{2}{h}\sin\Big(\frac{h}{2}\xi_m\Big)\Big).
\end{equation}
Note that $\tilde G_h$ is a $(2\pi/h)^m$-periodic function, and so it is a well-behaved symbol in $\hHh^{2d}$.

It is easy to check that the matrix-valued functions $G_h$ and $\tilde G_h$ are hermitian at every point and bounded. Consequently, the corresponding multiplication operators have the same properties and are defined on the whole space $\hHh^{2d}$. Using the discrete Fourier transform, we introduce the corresponding Fourier multipliers as
\begin{equation} \label{eq:Hh_def}
H_h:=\Fh^* G_h\Fh,\quad \tilde H_h:=\Fh^*\tilde G_h\Fh.
\end{equation}
The operators $H_h$ and $\tilde H_h$ are everywhere defined, bounded, and self-adjoint in $\Hh^{2d}$. When dealing with the resolvents of these operators, estimates analogous to \eqref{eq:G_0_res_est} will prove useful. We will need them to hold on $[-\frac{3\pi}{2h},\frac{3\pi}{2h}]^m$ with every $h>0$, 
%(or any other cube that contains $[-\frac{\pi}{h},\frac{\pi}{h}]^m$)
which is not possible for $G_h$ because
$G_h(\pm\frac{\pi}{h},\ldots,\pm\frac{\pi}{h})=G_0(0,\ldots,0)$. To overcome this difficulty, we will add a correction term to $G_h$.  Note that the same problem does not occur for $\tilde G_h$ due to the presence of $1/2$ in the arguments of the sine functions that appear in its definition. 

Motivated by \cite{CoGaJe_23}, we put
\begin{equation*} 
G_h^+:=G_h+
\begin{pmatrix}
f_h I_{\C^d} & 0\\
0 & -f_h I_{\C^d}
\end{pmatrix},\qquad
f_h=f_h(\xi):=\Big(h\sum_{j=1}^m\frac{4}{h^2}\sin^2\Big(\frac{h\xi_j}{2}\Big)\Big)^\gamma.
\end{equation*}
The symbol $f_h$ corresponds to $(-h\Delta_h)^\gamma$. We will also use the following notation,
\begin{equation} \label{eq:Hh+_def}
H_h^+:=\Fh^*G_h^+\Fh=H_h+(-h\Delta_h)^\gamma\begin{pmatrix}I_{\C^d} & 0\\ 0&-I_{\C^d}\end{pmatrix}.
\end{equation}
Although $\lim_{h\to 0+}\|f_h\|_{\mathscr{B}(\hHh)}=+\infty$, the correction term is small relatively to $G_h$ in the sense that the difference of the resolvents of $G_h^+$ and $G_h$ tends to zero in the strong operator topology as $h\to 0+$,  which follows from Theorem \ref{theo:strong_conv}. Whenever dealing with $G_h$ or $G_h^+$ we will always assume that 
\begin{equation} \label{eq:G_11_pos}
G_{11}(\xi)\geq 0 \quad (\forall\xi\in\R^m).
\end{equation}
The reason for this restriction is apparent from the proof of the following lemma. Before stating it, we summarize the additional assumptions we have made on the considered symbols.
\begin{assumptions} \label{ass:2}
When dealing with $\tilde{G}_h$ we assume that $G_0$ obeys \eqref{eq:Gsymm}. When dealing with $G_h$ or $G_h^+$ we assume that $G_0$ obeys \eqref{eq:G_11_pos}.
\end{assumptions}

\begin{lemma}
There exists $C>0$ such that for all $h>0$ and $h\xi\in[-3\pi/2,3\pi/2]^m$,
\begin{align}
&(\tilde G_h(\xi)^2+1)^{-1}\leq C(1+|\xi|^{2\gamma})^{-1}I_{\C^{2d}}, \label{eq:tildeG_res_est}\\ &(G_h^+(\xi)^2+1)^{-1}\leq C(1+|\xi|^{2\gamma})^{-1}I_{\C^{2d}}.\label{eq:G+_res_est} 
\end{align}
\end{lemma}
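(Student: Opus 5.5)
We reduce both estimates to \eqref{eq:G_0_res_est} evaluated at a rescaled frequency, together with an elementary comparison of $|\eta|$ and $|\xi|$ on the cube $h\xi\in[-3\pi/2,3\pi/2]^m$.

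\emph{Estimate \eqref{eq:tildeG_res_est}.} Put $\eta_j:=\frac{2}{h}\sin(\frac{h}{2}\xi_j)$, so that $\tilde G_h(\xi)=G_0(\eta)$ and hence
\[
(\tilde G_h(\xi)^2+1)^{-1}=(G_0(\eta)^2+1)^{-1}\leq C(1+|\eta|^{2\gamma})^{-1}I_{\C^{2d}}
\]
by \eqref{eq:G_0_res_est}. It remains to bound $|\xi|$ by a multiple of $|\eta|$. Since $|h\xi_j/2|\leq 3\pi/4<\pi$, we have $|\sin(t)|\geq c|t|$ for $|t|\leq 3\pi/4$, with $c=\sin(3\pi/4)/(3\pi/4)>0$. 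Therefore $|\eta_j|\geq c|\xi_j|$, so $|\eta|\geq c|\xi|$, and the bound follows with a new $C$. This is precisely where the factor $1/2$ inside the sine matters.

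\emph{Estimate \eqref{eq:G+_res_est}.} Write $D:=\operatorname{diag}(f_hI,-f_hI)$, $\eta_j:=\sin(h\xi_j)/h$ and $G_h(\xi)=G_0(\eta)$. Expanding the square gives
\[
(G_h^+)^2=G_h^2+f_h\{G_h,\sigma\}+f_h^2,
\]
where $\sigma=\operatorname{diag}(I,-I)$. Direct block computation gives $\{G_0,\sigma\}=\operatorname{diag}(2G_{11},-2G_{11})$, since the off-diagonal blocks anticommute with $\sigma$. Consequently $(G_h^+)^2$ is block diagonal:
\begin{align*}
\text{upper block}&=(G_{11}+f_h)^2+G_{12}G_{12}^*,\\
\text{lower block}&=(G_{11}+f_h)^2+G_{12}^*G_{12},
\end{align*}
with everything evaluated at $\eta$. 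The lower block uses that $G_{11}$ commutes with $G_{12}^*$, which follows from (i) and (iii), together with normality (ii). Because $G_{11}\geq0$ by \eqref{eq:G_11_pos} and $f_h\geq0$, we get $(G_{11}+f_h)^2\geq G_{11}^2+f_h^2$; here the cross term $2f_hG_{11}\geq0$ is the place where \eqref{eq:G_11_pos} is needed. This yields
\[
(G_h^+)^2+1\geq \bigl(G_0(\eta)^2+1\bigr)+f_h^2.
\]
By \eqref{eq:G_0_res_est} the right side is at least $c(1+|\eta|^{2\gamma})+f_h^2$. The desired estimate then reduces to the scalar inequality
\[
1+|\eta|^{2\gamma}+f_h^2\geq c\,(1+|\xi|^{2\gamma})\quad\text{on } h\xi\in[-3\pi/2,3\pi/2]^m.
\]

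\emph{The scalar inequality: main step.} I would prove it coordinatewise, splitting into two regions.
\begin{itemize}
\item If $|h\xi_j|\leq \pi/4$ for all $j$, then $|\eta_j|\geq c|\xi_j|$ and we are done as before.
\item Otherwise some coordinate has $|h\xi_j|\in[\pi/4,3\pi/2]$. Then $\sin^2(h\xi_j/2)\geq \sin^2(\pi/8)$, so
\[
f_h\geq \bigl(h\cdot 4h^{-2}\sin^2(\pi/8)\bigr)^\gamma=c\,h^{-\gamma},
\]
hence $f_h^2\geq c\,h^{-2\gamma}$. Since $|\xi|\leq 3\pi/(2h)$, this gives $f_h^2\geq c'|\xi|^{2\gamma}$.
\end{itemize}
The loss is only in constants, which are independent of $h$ because everything scales. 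I would double-check that the exponent of $f_h$ is exactly what makes $f_h^2\sim h^{-2\gamma}$ compensate $|\xi|^{2\gamma}\lesssim h^{-2\gamma}$. The hardest part to get right is the block-diagonal algebra with the anticommutator and the positivity argument; once those are verified, the remaining estimates are elementary.
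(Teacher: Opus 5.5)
Your proof is correct and follows essentially the paper's route. For $\tilde G_h$ you use $|\sin t|\geq c|t|$ on $|t|\leq 3\pi/4$ together with \eqref{eq:G_0_res_est}. For $G_h^+$ you use positivity of the cross term $2f_hG_{11}$ to get $(G_h^+)^2\geq G_h^2$ near the origin and $(G_h^+)^2\geq f_h^2$ away from it; packaging both into $(G_h^+)^2+1\geq G_0(\eta)^2+1+f_h^2$, and splitting at $\pi/4$ instead of the paper's $3\pi/4$, is immaterial.

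One slip needs fixing: $\{G_0,\sigma\}=\operatorname{diag}(2G_{11},2G_{11})$, not $\operatorname{diag}(2G_{11},-2G_{11})$, because the diagonal part $\operatorname{diag}(G_{11},-G_{11})$ commutes with $\sigma$ and contributes $2\operatorname{diag}(G_{11},-G_{11})\sigma$. With your sign the lower block would be $(G_{11}-f_h)^2+G_{12}^*G_{12}$ and the positivity step would fail. The block formulas $(G_{11}+f_h)^2+\dots$ that you actually use are the correct ones, so the argument stands.
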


\begin{proof}
To get the first inequality, we note that there exists $C_1>0$ such that  $|\sin(h\xi_j/2)|\geq C_1 h|\xi_j|$ for all $h\xi_j\in[-3\pi/2,3\pi/2]$. Then we just combine \eqref{eq:tildeGh} and \eqref{eq:G_0_res_est}. For the second inequality, we derive from \eqref{eq:G_11_pos} and non-negativity of $f_h(\xi)$ that
\begin{multline*}
G_h^+(\xi)^2=\big((G_{11,h}(\xi)+f_h(\xi) I_{\C^d})^2+G_{12,h}(\xi)G_{12,h}(\xi)^*\big)\otimes I_{\C^2}\\
\geq \big(G_{11,h}(\xi)^2+G_{12,h}(\xi)G_{12,h}(\xi)^*\big)\otimes I_{\C^2}=G_0\Big(\frac{\sin(h\xi_1)}{h},\ldots, \frac{\sin(h\xi_m)}{h}\Big)^2.
\end{multline*} 
This, \eqref{eq:G_0_bound}, and the estimate $|\sin(h\xi_j)|\geq Kh|\xi_j|$, which is valid with some $K>0$ for all $h\xi_j\in[-3\pi/4,3\pi/4]$, yield
\begin{multline*}
(G_h^+(\xi)^2+1)^{-1}\leq \Big(G_0\Big(\frac{\sin(h\xi_1)}{h},\ldots, \frac{\sin(h\xi_m)}{h}\Big)^2+1\Big)^{-1}\\
\leq C\Big(1+\Big|\Big(\frac{\sin(h\xi_1)}{h},\ldots, \frac{\sin(h\xi_m)}{h}\Big)\Big|^{2\gamma}\Big)^{-1}I_{\C^{2d}}\leq C(1+|\xi|^{2\gamma})^{-1}I_{\C^{2d}},\\
\end{multline*}
i.e. \eqref{eq:G+_res_est} for all $h\xi\in[-3\pi/4,3\pi/4]^m$. For $h\xi\in[-3\pi/2,3\pi/2]^m\setminus[-3\pi/4,3\pi/4]^m$, we have
\begin{equation*}
G_h^+(\xi)^2\geq f_h(\xi)^2 I_{\C^{2d}}\geq (4C_1^2 h|\xi|^2)^{2\gamma} I_{\C^{2d}}\geq(3\pi C_1^2|\xi|)^{2\gamma} I_{\C^{2d}},
\end{equation*}
which leads to \eqref{eq:G+_res_est} on the rest of $[-3\pi/2,3\pi/2]^m$.
\end{proof}

We are now ready to show a fundamental result on a uniform bound for the point-wise resolvents. Its proof essentially mimics the proof of \cite[Lemma 3.4]{CoGaJe_23}. Nevertheless, we slightly modified the argument to weaken one of the restrictions on parameters (namely, $2\gamma\leq \beta+3$ in our notation) that appears there.

\begin{proposition} \label{prop:pointwise_res_est}
If $2\gamma>\max\{\beta,0\}+1$, then there exists $C>0$ such that for all $h>0$ and $h\xi\in[-3\pi/2,3\pi/2]^m$,
\begin{align} 
&\|(\tilde G_h(\xi)\pm\iu)^{-1}-(G_0(\xi)\pm\iu)^{-1}\|_{\B(\C^{2d})}\leq Ch^{\min\{2, 2\gamma-\max\{\beta,0\}-1\}}, \label{eq:pointwise_res_est_tilde}
\\
&\|(G_h^+(\xi)\pm\iu)^{-1}-(G_0(\xi)\pm\iu)^{-1}\|_{\B(\C^{2d})}\leq Ch^{\min\{2, 2\gamma-\max\{\beta,0\}-1\}}.  \label{eq:pointwise_res_est_+}
\end{align}
\end{proposition}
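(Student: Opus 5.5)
We use the second resolvent identity: for hermitian $A,B\in\B(\C^{2d})$,
\begin{equation*}
(A\pm\iu)^{-1}-(B\pm\iu)^{-1}=(A\pm\iu)^{-1}(B-A)(B\pm\iu)^{-1}.
\end{equation*}
By \eqref{eq:Cstar}, \eqref{eq:G_0_res_est}, and \eqref{eq:tildeG_res_est} (respectively \eqref{eq:G+_res_est}), both resolvent factors have norm at most $C(1+|\xi|^{\gamma})^{-1}$ for $h\xi\in[-3\pi/2,3\pi/2]^m$. Hence it suffices to show
\begin{equation*}
\|\tilde G_h(\xi)-G_0(\xi)\|,\ \|G_h^+(\xi)-G_0(\xi)\|\leq C\,h^{\min\{2,2\gamma-\max\{\beta,0\}-1\}}(1+|\xi|^{2\gamma}).
\end{equation*}
Once a bound of the form $\|A-B\|\le C h^a(1+|\xi|)^b$ is available, we can also use the cruder route of keeping only one resolvent factor, since each factor is bounded by $1$. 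We then interpolate, using $h|\xi|\le C$ on the cube.

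\textbf{Symbol differences.} We write $s_h(\xi)$ for the vector $(\frac{2}{h}\sin(\frac{h\xi_j}{2}))_j$, respectively $(\frac{\sin(h\xi_j)}{h})_j$. For $\tilde G_h$ we use \eqref{eq:Gsymm} to replace $\xi_j$ by $|\xi_j|$ and $s_{h,j}$ by $|s_{h,j}|$. This ensures that both arguments lie in the same orthant, so the segment between them stays there. By the mean value theorem and (v),
\begin{equation*}
|\tilde G_h(\xi)-G_0(\xi)|\le C\,|s_h(\xi)-\xi|\,\max(|\xi|,|s_h(\xi)|)^{\max\{\beta,0\}}.
\end{equation*}
We treat $\beta<0$ by bounding $|\eta|^\beta$ along the segment; this is the delicate spot when the segment approaches the origin. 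One can instead simply bound by $C(1+|\xi|)^{\max\{\beta,0\}}$: for small $|\xi|$ the gradient is bounded by continuity, and for large $|\xi|$ both endpoints have size comparable to $|\xi|$. Since $|s_h(\xi)-\xi|\le Ch^2|\xi|^3$ and also $\le C|\xi|$, we obtain
\begin{equation*}
|\tilde G_h-G_0|\le C\min\{h^2|\xi|^3,|\xi|\}(1+|\xi|)^{\max\{\beta,0\}}.
\end{equation*}
For $G_h^+$ we must in addition handle the correction $f_h\le C(h|\xi|^2)^{\gamma}$. For $\sin(h\xi_j)/h$ on the cube the sine need not be monotone, but $|s_h|\le|\xi|$, so the same mean value bound holds.

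\textbf{Combining.} We multiply by the resolvent bounds $C(1+|\xi|^{\gamma})^{-2}$ and optimise in $|\xi|$. For the $G_0$ part, write $\kappa=\max\{\beta,0\}$, $a=2\gamma-\kappa-1$, and $t=h|\xi|\le C$. If $a\ge2$, then $h^2|\xi|^{3+\kappa-2\gamma}\le h^2$ for $|\xi|\ge1$. If $a<2$, write
\begin{equation*}
h^2|\xi|^{3+\kappa}=h^{a}|\xi|^{2\gamma}\,t^{2-a},
\end{equation*}
and use $t\le C$ to get the bound $Ch^{a}$. For bounded $|\xi|$ the estimate is trivially $Ch^2$.

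\textbf{Correction term (main obstacle).} The correction contributes
\begin{equation*}
\frac{h^{\gamma}|\xi|^{2\gamma}}{(1+|\xi|^{\gamma})^2}\le Ch^{\gamma},
\end{equation*}
and $h^\gamma\le Ch^{a}$ because $\gamma\ge a$ by \eqref{eq:beta_restriction2}. This is exactly where the restriction enters. One must also check the regime $h\le1$ versus $h>1$: for $h\ge1$ everything is trivially bounded by $2\le 2h^{\min\{\cdot\}}$, since the exponent is positive by the hypothesis $2\gamma>\max\{\beta,0\}+1$. Apart from this, the main care lies in the mean value estimate near the origin when $\beta<0$ and in the non-monotonicity of the sine on $[-3\pi/2,3\pi/2]$. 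We handle both by replacing gradients with $C(1+|\eta|)^{\max\{\beta,0\}}$, using continuity of $\nabla G_\star$.
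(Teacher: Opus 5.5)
Your proposal is correct and follows essentially the same route as the paper. Both use the second resolvent identity, the resolvent bounds \eqref{eq:G_0_res_est}, \eqref{eq:tildeG_res_est}, \eqref{eq:G+_res_est}, and an entrywise mean-value estimate $|g(s_h(\xi))-g(\xi)|\le Ch^2|\xi|^{3+\max\{\beta,0\}}$ (with a bounded gradient when $\beta<0$); the factor $(h|\xi|)^{2-\delta}$ is absorbed using $h|\xi|\le C$, and the $h^\gamma$ contribution of the correction term is absorbed via \eqref{eq:beta_restriction2}. Your orthant reduction and your separate treatment of $h\ge 1$ are harmless but unnecessary, because $h|\xi|\le C$ on the cube already makes the estimate uniform in all $h>0$.
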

\begin{proof}
We will first prove the first inequality and then we will briefly explain what is different when showing the other one. We start with
\begin{equation} \label{eq:2nd_res_form}
 (\tilde G_h(\xi)\pm\iu)^{-1}-(G_0(\xi)\pm\iu)^{-1}=(\tilde G_h(\xi)\pm\iu)^{-1}(G_0(\xi)-\tilde G_h(\xi))(G_0(\xi)\pm\iu)^{-1}.
\end{equation}
The norms of the first and the third term on the right-hand side may be controlled using \eqref{eq:Cstar}, \eqref{eq:tildeG_res_est}, and \eqref{eq:G_0_res_est} as follows
\begin{equation} \label{eq:matrix_res_est}
\|(\tilde G_h(\xi)\pm\iu)^{-1}\|_{\B(\C^{2d})}\leq C(1+|\xi|^{2\gamma})^{-1/2},\,\, \|(G_0(\xi)\pm\iu)^{-1}\|_{\B(\C^{2d})}\leq C(1+|\xi|^{2\gamma})^{-1/2}.
\end{equation}
To control the norm of the middle term, it is sufficient to find a bound for all its matrix elements. Given $\star\in\{11,12\}$ and $i,j\in\{1,\ldots,d\}$, put 
$g_0(\xi):=(G_\star(\xi))_{ij}$ and $g_h(\xi):=(\tilde G_{\star,h}(\xi))_{ij}$.
Then we have
\begin{equation*}
g_h(\xi)=g_0\Big(\frac{2}{h}\sin\Big(\frac{h}{2}\xi_1\Big),\ldots, \frac{2}{h}\sin\Big(\frac{h}{2}\xi_m\Big)\Big).
\end{equation*}
Using Taylor's formula with remainder and the fundamental theorem of calculus we get
\begin{equation*}
\frac{2}{h}\sin\Big(\frac{h}{2}\xi_j\Big)=\xi_j-\frac{h^2}{8}\xi_j^3\int_0^1\cos(th\xi_j/2)(1-t)^2 \dd t=:\xi_j+\zeta_{h,j}
\end{equation*}
and
\begin{equation*}
g_h(\xi)=g_0(\xi_1+\zeta_{h,1},\ldots, \xi_m+\zeta_{h,m})=g_0(\xi)+\int_0^1\zeta_h\cdot\nabla g_0(\xi+t\zeta_h)\dd t,
\end{equation*}
respectively. 

\emph{The case $\beta\geq 0$}. With the help of $|\zeta_h|\leq C h^2|\xi|^3$, (v), and $h|\xi_j|\leq 3\pi/2$, we arrive at
\begin{multline} \label{eq:g_h-g_0}
|g_h(\xi)-g_0(\xi)|\leq|\zeta_h|\int_0^1|\nabla g_0(\xi+t\zeta_h)|\dd t\leq |\zeta_h|\int_0^1|\xi+t\zeta_h|^\beta\dd t\\
\leq C h^2|\xi|^3(|\xi|^\beta+h^{2\beta}|\xi|^{3\beta})\leq C h^2|\xi|^{3+\beta}
\end{multline}
for all $h\xi\in[-3\pi/2,3\pi/2]^m$.

Putting this, \eqref{eq:matrix_res_est} and \eqref{eq:2nd_res_form} together, we obtain
\begin{equation*}
\|(\tilde G_h(\xi)\pm\iu)^{-1}-(G_0(\xi)\pm\iu)^{-1}\|_{\B(\C^{2d})}
\leq C\frac{ h^2|\xi|^{3+\beta}}{1+|\xi|^{2\gamma}} \leq C h^\delta\,\frac{ |\xi|^{1+\beta+\delta}}{1+|\xi|^{2\gamma}}\leq Ch^\delta,
\end{equation*}
where $\delta=\min\{2, 2\gamma-\beta-1\}$.
This shows \eqref{eq:pointwise_res_est_tilde} for $\beta\geq 0$. 

\emph{The case $\beta<0$}. Combining (v) and $g_0\in C^1(\R^m)$, we get $\nabla g_0\leq C$. By a similar reasoning as in the previous case, we arrive at
\begin{equation*}
|g_h(\xi)-g_0(\xi)|\leq|\zeta_h|\int_0^1|\nabla g_0(\xi+t\zeta_h)|\dd t\leq C h^2|\xi|^3  \quad (\forall\, h\xi\in[-3\pi/2,3\pi/2]^m),
\end{equation*}
which is \eqref{eq:g_h-g_0} with $\beta=0$. Consequently,
\begin{equation*}
\|(\tilde G_h(\xi)\pm\iu)^{-1}-(G_0(\xi)\pm\iu)^{-1}\|_{\B(\C^{2d})}
\leq Ch^\delta,
\end{equation*}
with $\delta=\min\{2, 2\gamma-1\}$. This implies \eqref{eq:pointwise_res_est_tilde} for $\beta< 0$. 

The proof of \eqref{eq:pointwise_res_est_+} essentially differs only at the point where the elements of the middle term of the right-hand side of \eqref{eq:2nd_res_form} (with $G_h^+$ instead of $G_h$) are estimated. With the notation $g_h^+(\xi):=(G_{11,h}^+(\xi))_{jj},\, j=1,\ldots, d$, and $\beta^+:=\max\{\beta, 0\}$ we have
\begin{equation*}
|g_h^+(\xi)-g_0(\xi)|\leq C h^2|\xi|^{3+\beta^+}+|f_h(\xi)|\\
\leq C h^2|\xi|^{3+\beta^+}+h^\gamma|\xi|^{2\gamma}.
\end{equation*}
The remaining elements that do not contain the correction term are estimated just by $C h^2|\xi|^{3+\beta^+}$. We conclude that
\begin{multline*}
\|(G_h^+(\xi)\pm\iu)^{-1}-(G_0(\xi)\pm\iu)^{-1}\|_{\B(\C^{2d})}
\leq C h^\delta\,\frac{ |\xi|^{1+\beta^+ +\delta}}{1+|\xi|^{2\gamma}}+Ch^\gamma\frac{ |\xi|^{2\gamma}}{1+|\xi|^{2\gamma}}\\
\leq Ch^\delta\frac{ |\xi|^{1+\beta^+ +\delta}+|\xi|^{\gamma+\delta}}{1+|\xi|^{2\gamma}}\leq C h^\delta,
\end{multline*}
where $\delta=\min\{2, \gamma, 2\gamma-\beta^+-1\}$. Finally, due to \eqref{eq:beta_restriction2}, $\delta=\{2,2\gamma-\beta^+-1\}$.
\end{proof}

\begin{remark}
Note that the key assumption $2\gamma>\max\{\beta,0\}+1$ may be only satisfied for $\gamma>1/2$. In view of \eqref{eq:beta_restriction}, this yields further restriction on $\beta$, namely $\beta>-1/2$ .
\end{remark}

\begin{remark} \label{rem:comparison}
Although we dropped one assumption  of \cite[Lemma 3.4]{CoGaJe_21}, which in our notation reads 
\begin{equation} \label{eq:2gamma_beta_3}
2\gamma\leq \beta+3,
\end{equation}
the bound \eqref{eq:pointwise_res_est_tilde}  is in the end equivalent to that of \cite[Lemma 3.4]{CoGaJe_21}, that deals with scalar symbols.  To see it, consider the case when $\beta\geq 0$ (for $\beta<0$ one may proceed similarly), and look at the exponent of the convergence rate, i.e. $\min\{2, 2\gamma-\beta-1\}$. The condition $2\geq 2\gamma-\beta-1$ is equivalent to \eqref{eq:2gamma_beta_3}. If the opposite is true, the exponent in the convergence rate is $2$. But then we may take $\tilde\gamma<\gamma$ such that $2\tilde\gamma=\beta+3$ instead of $\gamma$. It will always satisfy (iv). Therefore, assuming \eqref{eq:2gamma_beta_3}, the upper bounds in Proposition \ref{prop:pointwise_res_est} would be just $C h^{2\gamma-\max\{\beta, 0\}-1}$. The same observation applies to the Theorem \ref{theo:main} below.
\end{remark}

The principal result of this section reads as follows.

\begin{theorem} \label{theo:main}
Let $G_0$ satisfy Assumptions \ref{ass:1} and \ref{ass:2}, $H_0,\tilde H_h,$ and $H_h^+$ be the Fourier multipliers whose symbols are defined in \eqref{eq:Hh0_def}, \eqref{eq:Hh_def}, and \eqref{eq:Hh+_def}, respectively.
If $2\gamma>\max\{\beta,0\}+1$  then there exist $C>0$ and $h_0>0$ such that for all $h\in(0,h_0)$, 
\begin{align} 
&\|J_h(\tilde H_h\pm\iu)^{-1}K_h-(H_0\pm\iu)^{-1}\|\leq Ch^{\min\{2, 2\gamma-\max\{\beta,0\}-1\}}, \label{eq:res_est_tilde}
\\
&\|J_h (H_h^+\pm\iu)^{-1}K_h-(H_0\pm\iu)^{-1}\|\leq Ch^{\min\{2, 2\gamma-\max\{\beta,0\}-1\}}.  \label{eq:res_est_+}
\end{align}
\end{theorem}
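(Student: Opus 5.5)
The plan follows the strategy of \cite{CoGaJe_21, CoGaJe_23}: pass to the Fourier side, where $J_h$ and $K_h$ act as explicit multipliers combined with periodization. We then split the resolvent difference into a low-frequency part, handled by Proposition \ref{prop:pointwise_res_est}, and a high-frequency part, handled by the decay \eqref{eq:G_0_res_est}, \eqref{eq:tildeG_res_est}, \eqref{eq:G+_res_est}.

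\emph{Fourier representations of $J_h$ and $K_h$.} A direct computation gives
\begin{equation*}
\F J_h u(\xi)=(2\pi)^{m/2}h^{-m}\cdot h^m\hat\phi_0(h\xi)\,(\Fh u)(\xi)
\end{equation*}
up to normalization, where $\Fh u$ is extended $(2\pi/h)^m$-periodically to $\R^m$. Similarly, $\Fh K_h f(\xi)$ equals the periodization $\sum_{n\in\Z^m}\overline{\hat\psi_0(h(\xi+2\pi n/h))}\hat f(\xi+2\pi n/h)$ with the analogous constant. Let $R_h$ be either $(\tilde G_h\pm\iu)^{-1}$ or $(G_h^+\pm\iu)^{-1}$; this symbol is $(2\pi/h)^m$-periodic. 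The operator $\F J_h\Fh^*R_h\Fh K_h\F^*$ then acts as
\begin{equation*}
g\mapsto c\,\hat\phi_0(h\xi)R_h(\xi)\sum_n\overline{\hat\psi_0(h\xi+2\pi n)}\,g(\xi+2\pi n/h).
\end{equation*}
Because of \eqref{eq:supports_ass}, only finitely many $n$ (those with $|n|\le 1$) contribute, and $\xi$ is restricted to $h\xi\in[-3\pi/2,3\pi/2]^m$. Biorthogonality gives the identity $c\sum_n\hat\phi_0(h\xi+2\pi n)\overline{\hat\psi_0(h\xi+2\pi n)}=1$ (in the form of \cite[(2.8)]{CoGaJe_21}). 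This identity lets us write the identity, and hence $(G_0\pm\iu)^{-1}$, in a matching form.

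\emph{Decomposition.} We split $\F[J_hR_hK_h-(H_0\pm\iu)^{-1}]\F^*$ into three pieces. The first is the diagonal term $n=0$ compared with $c|\ldots|\,(G_0\pm\iu)^{-1}$, namely
\begin{equation*}
c\,\hat\phi_0(h\xi)\overline{\hat\psi_0(h\xi)}\bigl(R_h(\xi)-(G_0(\xi)\pm\iu)^{-1}\bigr).
\end{equation*}
Its norm is at most $Ch^{\delta}$, with $\delta=\min\{2,2\gamma-\max\{\beta,0\}-1\}$, by Proposition \ref{prop:pointwise_res_est} and boundedness of $\hat\phi_0,\hat\psi_0$. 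The second piece is the remainder $(1-c\hat\phi_0\overline{\hat\psi_0}(h\xi))(G_0(\xi)\pm\iu)^{-1}$. Where the prefactor is nonzero, $|\xi|\ge C/h$ (the product equals $1$ near the origin), so by \eqref{eq:G_0_res_est} this piece is $O(h^{\gamma})$. The third piece consists of the off-diagonal terms $n\neq 0$. There $h\xi$ and $h\xi+2\pi n$ both lie in $[-3\pi/2,3\pi/2]^m$, which forces $|h\xi|\ge\pi/2$. Then \eqref{eq:tildeG_res_est} or \eqref{eq:G+_res_est} gives $\|R_h(\xi)\|\le C(1+|\xi|^{2\gamma})^{-1/2}\le Ch^{\gamma}$. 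Each shift $g\mapsto g(\cdot+2\pi n/h)$ is unitary, so this piece is also $O(h^\gamma)$.

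\emph{Conclusion and main obstacle.} By \eqref{eq:beta_restriction2}, $\gamma\ge 2\gamma-\max\{\beta,0\}-1\ge\delta$ when $\delta<2$. If $\delta=2$ and $\gamma<2$, I would use the reduction of Remark \ref{rem:comparison}, replacing $\gamma$ by a smaller admissible exponent as there. Alternatively, I would check that the second and third pieces actually vanish identically when the generating functions are chosen so that the product is $1$ on the relevant set. With either route, all three pieces are $O(h^\delta)$ for $h<h_0$.

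The main obstacle is the bookkeeping in the first step. One must derive the exact Fourier-side formula for $J_hR_hK_h$, including the periodization and constants, and correctly use the biorthogonality identity to absorb the continuum resolvent. I would follow \cite[Section 2--3]{CoGaJe_21} literally for this, since the matrix-valued extension is componentwise. The high-frequency pieces rely only on the lower bounds from the lemma on $[-3\pi/2,3\pi/2]^m$, which is exactly why that lemma was stated on this cube.
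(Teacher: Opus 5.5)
Your proof is correct and uses the same ingredients as the paper:
\begin{itemize}
\item the Fourier-side formula of \cite{CoGaJe_21}, restricted to $|n|\le 1$ by the support conditions;
\item Proposition \ref{prop:pointwise_res_est} for the $n=0$ term;
\item the lemma on $[-3\pi/2,3\pi/2]^m$, together with periodicity, for the aliased terms;
\item decay of $(G_0(\xi)\pm\iu)^{-1}$ for $|\xi|\geq C/h$.
\end{itemize}
What differs is how the terms are grouped. The paper inserts $J_hK_h(H_0\pm\iu)^{-1}$ and bounds $(I-J_hK_h)(H_0\pm\iu)^{-1}=O(h^\gamma)$ by adapting \cite[Lemma 3.3]{CoGaJe_21}. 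It writes the rest as $\sum_{|j|\le 1}q_{j,h}$, with $S_h=R_h-(G_0\pm\iu)^{-1}$ evaluated at the shifted points $\xi+2\pi j/h$. As a result, the aliased continuum terms $(G_0(\xi+2\pi j/h)\pm\iu)^{-1}$, $|j|=1$, enter both pieces with opposite signs and are estimated twice, although they cancel in the sum.

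Your three-piece splitting never produces these terms. Your off-diagonal terms contain only the periodic symbol $R_h$. Your only purely continuum piece is the multiplier $\bigl(1-(2\pi)^m\hat\phi_0(h\xi)\overline{\hat\psi_0(h\xi)}\bigr)(G_0(\xi)\pm\iu)^{-1}$, which you control directly through the biorthogonality identity instead of quoting a lemma. This makes your version self-contained and slightly shorter. Because you use the global bound \eqref{eq:G_0_res_est} rather than \eqref{eq:G_0_bound}, you also need no threshold $h_0$. The paper's modular form, in turn, follows \cite{CoGaJe_21} step by step. Its $q_{j,h}$ formula is also reused in the proof of Theorem \ref{theo:strong_conv} by merely changing $S_h$.

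Your final case distinction is unnecessary, and the fallbacks you propose for it would not work. Since $\delta=\min\{2,2\gamma-\max\{\beta,0\}-1\}\le 2\gamma-\max\{\beta,0\}-1$, inequality \eqref{eq:beta_restriction2} gives $\gamma\geq\delta$ unconditionally. In particular, $\delta=2$ forces $\gamma\geq 2$, so the case $\delta=2$, $\gamma<2$ is empty; this is exactly how the paper concludes.

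As for the fallbacks:
\begin{itemize}
\item Lowering $\gamma$ as in Remark \ref{rem:comparison} would only weaken your $O(h^\gamma)$ bounds on the second and third pieces.
\item The generating functions are given in the theorem, not chosen, so you cannot arrange for pieces to vanish.
\item Your second piece never vanishes anyway: $\hat\phi_0$ is compactly supported, so its prefactor equals $1$ whenever $h\xi\notin[-3\pi/2,3\pi/2]^m$.
\end{itemize}
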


\begin{proof}
One can essentially follow step by step the proof of \cite[Proposition 3.5]{CoGaJe_21}. Therefore, we will just summarize its main ideas for the reader's convenience. Let us do it for the first bound. 

By the triangle inequality,
\begin{multline} \label{eq:res_triangle}
\|J_h(\tilde H_h\pm\iu)^{-1}K_h-(H_0\pm\iu)^{-1}\|\\
\leq\|J_h(\tilde H_h\pm\iu)^{-1}K_h-J_h K_h(H_0\pm\iu)^{-1}\|+\|(I_{\HH^{2d}}-J_h K_h)(H_0\pm\iu)^{-1}\|.
\end{multline}
To estimate the latter term, it suffices to realize that given $L>0$, \eqref{eq:G_0_bound} and \eqref{eq:Cstar} imply that there exists $h_0$ such that for all $h\in(0,h_0)$,
\begin{equation} \label{eq:G_0_h_bound}
\|(G_0(\xi)\pm\iu)^{-1}\|_{\B(\C^{2d})}\leq\frac{1}{(1+C|\xi|^{2\gamma})^{1/2}}\leq\frac{C}{|\xi|^\gamma}\leq Ch^\gamma\quad (\forall\xi\in\R^m:\,  h|\xi|>L),
\end{equation}
and then modify the proof of \cite[Lemma 3.3]{CoGaJe_21} for matrix-valued symbols in an obvious way to arrive at 
\begin{equation*}
\|(I_{\HH^{2d}}-J_h K_h)(H_0\pm\iu)^{-1}\|\leq Ch^\gamma \quad (\forall h\in(0,h_0)).
\end{equation*}

The action of the first term in the bound \eqref{eq:res_triangle} on every $g$ in the space $\mathcal{S}(\R^m;\C^{2d})$ (of $\C^{2d}$-valued functions whose components belong to the Schwartz space) can be written in the frequency domain as
\begin{equation*}
 \F\big(J_h(\tilde H_h\pm\iu)^{-1}K_h-J_h K_h(H_0\pm\iu)^{-1}\big)\F^*g=\sum_{j\in\Z^m}q_{j,h},
\end{equation*}
where 
\begin{equation} \label{eq:q_def}
q_{j,h}(\xi):=(2\pi)^m\hat\phi_0(h\xi)\overline{\hat\psi_0(h\xi+2\pi j)}S_h(\xi+2\pi j/h)g(\xi+2\pi j/h) \quad (\forall \xi\in\R^m)
\end{equation}
with $S_h(\xi):=(\tilde G_h(\xi)\pm\iu)^{-1}-(G_0(\xi)\pm\iu)^{-1}$, see formula (3.21) in \cite{CoGaJe_21} that remains valid also in our setting.
Due to the support conditions in \eqref{eq:supports_ass}, only $q_{j,h}(\xi)$ with $|j|\leq 1$ and $h\xi\in[-3\pi/2,3\pi/2]^m$ is possibly non-zero. When $j=0$, we use \eqref{eq:pointwise_res_est_tilde} to control $S_h$. Taking into account the fact that $\hat\phi_0$ and $\hat\psi_0$ are bounded, we get
\begin{equation} \label{eq:q_0_bound}
|q_{0,h}(\xi)|\leq C h^{\min\{2,2\gamma-\max\{\beta,0\}-1\}} |g(\xi)|.
\end{equation} 
If $|j|=1$, then it is sufficient (again, because of \eqref{eq:supports_ass}) to consider $\xi$ such that $h\xi\in[-3\pi/2,3\pi/2]^m$ and $h|\xi|\geq\pi/2$, for which we write
\begin{multline*}
|q_{j,h}(\xi)|\\
\leq C\big(\|(\tilde G_h(\xi+2\pi j/h)\pm\iu)^{-1}\|_{\B(\C^{2d})}+\|(G_0(\xi+2\pi j/h)\pm\iu)^{-1}\|_{\B(\C^{2d})}\big)|g(\xi+2\pi j/h)|.
\end{multline*}
Since $|h\xi+2\pi j|\geq \pi/2$, we can estimate $\|(G_0(\xi+2\pi j/h)\pm\iu)^{-1}\|_{\B(\C^{2d})}$ as in \eqref{eq:G_0_h_bound}. Furthermore, by periodicity of $\tilde G_h$ and \eqref{eq:matrix_res_est}, we obtain
\begin{multline*} 
\|(\tilde G_h(\xi+2\pi j/h)\pm\iu)^{-1}\|_{\B(\C^{2d})}=\|(\tilde G_h(\xi)\pm\iu)^{-1}\|_{\B(\C^{2d})}\leq \frac{C}{(1+|\xi|^{2\gamma})^{1/2}}\\
\leq \frac{C}{|\xi|^\gamma}\leq C h^\gamma.
\end{multline*}
We conclude that 
\begin{equation*}
|q_{j,h}(\xi)|\leq C h^\gamma|g(\xi+2\pi j/h)| \quad (\forall j:\, |j|=1).
\end{equation*}
Integrating the square of this and that of \eqref{eq:q_0_bound} over $\R^m$, we arrive at
\begin{multline*}
\|\F\big(J_h(\tilde H_h\pm\iu)^{-1}K_h-J_h K_h(H_0\pm\iu)^{-1}\big)\F^*g\|_{\hat \HH^{2d}}\\
\leq C h^{\min\{2,\gamma,2\gamma-\max\{\beta,0\}-1\}}\|g\|_{\hat \HH^{2d}},
\end{multline*}
from which \eqref{eq:res_est_tilde} with $Ch^{\min\{2, \gamma, 2\gamma-\max\{\beta,0\}-1\}}$ on the right-hand side immediately follows. Finally, we use \eqref{eq:beta_restriction2}.

The proof of \eqref{eq:res_est_+} is analogous.
\end{proof}

\begin{remark}
Employing the same strategy as above, one cannot improve the convergence rate in \eqref{eq:res_est_+} by choosing a different correction term, because the used correction term is controlled by the same power of $h$ as $\|(G_0(\xi+2\pi j/h)\pm\iu)^{-1}\|_{\B(\C^{2d})}$. 
\end{remark}

When the correction term is absent, $H_h$ itself does not converge to $H_0$ in the \emph{generalized norm} resolvent sense.

\begin{proposition} \label{prop:neg_result}
There exist $C>0$ such that for all $h\in(0,1)$,
\begin{equation} \label{eq:pointwise_divergence}
\sup_{\xi\in[-\pi/h,\pi/h]^m}\|(G_h(\xi)\pm\iu)^{-1}-(G_h^+(\xi)\pm\iu)^{-1}\|_{\B(\C^{2d})}\geq C.
\end{equation}
Consequently, under the assumptions of Theorem \ref{theo:main}, we have
\begin{equation} \label{eq:div_result}
\lim_{h\to 0+}\|J_h (H_h\pm\iu)^{-1}K_h-(H_0\pm\iu)^{-1}\|\neq 0.
\end{equation}
\end{proposition}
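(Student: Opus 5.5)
For \eqref{eq:pointwise_divergence} I will evaluate the pointwise difference at the corner $\xi^h:=(\pi/h,\ldots,\pi/h)$, or at any point with $h\xi^h_j\in\{\pm\pi\}$. There $\sin(h\xi^h_j)=0$, so $G_h(\xi^h)=G_0(0)$. Also $\sin^2(h\xi^h_j/2)=1$, so
\[
f_h(\xi^h)=\bigl(4m/h\bigr)^\gamma\to+\infty .
\]
Hence $(G_h(\xi^h)\pm\iu)^{-1}=(G_0(0)\pm\iu)^{-1}$ is a fixed matrix. Its norm is $(\lambda_{\min}(G_0(0)^2)+1)^{-1/2}\geq (\|G_0(0)\|^2+1)^{-1/2}>0$, and it does not depend on $h$.

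For the corrected symbol I will use the bound from the proof of the lemma: $G_h^+(\xi)^2\geq f_h(\xi)^2$. This bound relied on \eqref{eq:G_11_pos}, which gives $(G_{11,h}+f_h)^2\geq f_h^2$ since both operators are nonnegative and commute. It implies
\[
\|(G_h^+(\xi^h)\pm\iu)^{-1}\|\leq (1+f_h(\xi^h)^2)^{-1/2}\leq (h/4m)^{\gamma},
\]
and for $h\in(0,1)$ this is at most $(4m)^{-\gamma}$. The difference then has norm at least $(\|G_0(0)\|^2+1)^{-1/2}-(4m)^{-\gamma}$. If that quantity is not positive, I will make the constant uniform differently: take $\inf_{h<h_1}$ of the lower bound for small $h_1$, and on $[h_1,1)$ use continuity and nonvanishing in $h$. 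Simpler still, I will note directly that the difference equals
\[
(G_h\pm\iu)^{-1}(G_h^+-G_h)(G_h^+\pm\iu)^{-1},
\]
which is nonzero because $f_h>0$. Continuity of this expression in $h$ on the compact interval $[h_1,1]$ then gives a positive lower bound there, so $C>0$ works for all $h\in(0,1)$.

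For \eqref{eq:div_result} I will argue by contradiction, using the triangle inequality and Theorem \ref{theo:main}. By \eqref{eq:res_est_+}, it suffices to show that
\[
\|J_h\bigl[(H_h\pm\iu)^{-1}-(H_h^+\pm\iu)^{-1}\bigr]K_h\|\geq c>0
\]
for small $h$. To bound $J_hDK_h$ from below for a Fourier multiplier $D=\Fh^*M\Fh$, I will use the frequency-domain representation from the proof of Theorem \ref{theo:main}. In that representation, $\F J_h D K_h\F^*g=\sum_j(2\pi)^m\hat\phi_0(h\xi)\overline{\hat\psi_0(h\xi+2\pi j)}M(\xi+2\pi j/h)g(\xi+2\pi j/h)$, where $M$ is extended periodically. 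I will choose $g$ supported near a point $\xi^*$ with $h\xi^*\in[-\pi/2,\pi/2]^m$, where only $j=0$ contributes and $|\hat\phi_0|,|\hat\psi_0|\geq C_0$. Since $M$ is periodic, the test point can be shifted. The obstacle is that the large gap sits at $h\xi=\pi$, which lies outside the region where $\hat\phi_0,\hat\psi_0$ are bounded below. I therefore need a point $\xi^*$ with $h\xi^*_j\in[-\pi/2,\pi/2]$ where the gap is also of order one.

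To find one, I take $h\xi^*_j=\pi/2$. There $\sin(h\xi^*_j)/h=1/h$, so $G_h(\xi^*)=G_0(1/h,\ldots)$ is large and its resolvent is small, of order $h^{\gamma}$. Meanwhile $f_h\sim h^{-\gamma}$, so the corrected resolvent is small too. That point fails.

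I will therefore avoid needing the gap inside $[-\pi/2,\pi/2]^m$. Instead I will use a modulation argument: the operator $J_hDK_h$ commutes up to uniformly bounded factors with the shift $\xi\mapsto\xi+\pi/h\cdot e$ of the lattice symbol. More simply, I will choose the test function $u\in\Hh^{2d}$ directly, namely $u=K_hg$ or an arbitrary sequence. Then the Riesz bounds give $\|J_hDv\|\geq\sqrt{A_\phi}\|Dv\|$. I also need $v=K_hg$ ranging over a set on which $\|Dv\|\gtrsim\|v\|$ with $\|g\|$ controlled. This requires $K_h$ to be surjective with a bounded right inverse, and $K_hJ_h=I$ shows exactly that: take $g=J_hv$. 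Then $\|J_hDK_hJ_hv\|=\|J_hDv\|\geq\sqrt{A_\phi}\|Dv\|$ and $\|J_hv\|\leq\sqrt{B_\phi}\|v\|$. Hence $\|J_hDK_h\|\geq\sqrt{A_\phi/B_\phi}\,\|D\|$, and $\|D\|$ equals the sup in \eqref{eq:pointwise_divergence}, which is at least $C$. Combined with \eqref{eq:res_est_+}, this yields \eqref{eq:div_result}. The main obstacle was the location of the gap, and this last argument settles it.
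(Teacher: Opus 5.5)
Your proof is correct and follows essentially the paper's route. You evaluate at the corner $(\pi/h,\ldots,\pi/h)$, where $G_h=G_0(0)$ and $f_h=(4m/h)^\gamma$, then transfer the symbol bound to $J_h(\cdot)K_h$ via $K_hJ_h=I$ and uniform Riesz bounds, and conclude with \eqref{eq:res_est_+} and the triangle inequality. The paper gets the uniform pointwise bound on all of $(0,1)$ in one step from $\|ABD\|\geq\|B\|/(\|A^{-1}\|\|D^{-1}\|)$, which makes your reverse-triangle-plus-compactness patch unnecessary, and your detour about where $\hat\phi_0,\hat\psi_0$ are bounded below was never needed.
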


\begin{proof}
Using the second resolvent identity, we write
\begin{equation*}
 (G_h(\xi)\pm\iu)^{-1}-(G_h^+(\xi)\pm\iu)^{-1}=A(\xi)B(\xi)D(\xi),
\end{equation*}
where 
\begin{equation*}
A(\xi):=(G_h(\xi)\pm\iu)^{-1},\quad 
B(\xi):=f_h(\xi) \begin{pmatrix} I_{\C^d} & 0\\ 0 & -I_{\C^d}\end{pmatrix},
\quad D(\xi):=(G_h^+(\xi)\pm\iu)^{-1}.
\end{equation*}
Since $A(\xi)$ and $D(\xi)$ are invertible, we get 
\begin{equation} \label{eq:ABD}
\|A(\xi)B(\xi)D(\xi)\|_{\B(\C^{2d})}\geq\frac{\|B(\xi)\|_{\B(\C^{2d})}}{\|A(\xi)^{-1}\|_{\B(\C^{2d})}\|D(\xi)^{-1}\|_{\B(\C^{2d})}}. 
\end{equation}
We will evaluate the norms on the right-hand side at $\xi_j=\pi/h,\, \forall j\in\{1,\ldots,m\}$. First, we have
\begin{equation} \label{eq:A_bound}
\|A(\pi/h,\ldots,\pi/h)^{-1}\|_{\B(\C^{2d})}=\|G_0(0)\pm\iu\|_{\B(\C^{2d})}=\|G_0(0)^2+1\|_{\B(\C^{2d})}^{1/2},
\end{equation}
where $0$ in the argument of $G_0$ stands for the null vector in $\R^m$. Similarly, we get
\begin{multline*}
\|D(\pi/h,\ldots,\pi/h)^{-1}\|_{\B(\C^{2d})}\\
=\Big\|G_0(0)+f_h(\pi/h,\ldots,\pi/h)\begin{pmatrix} I_{\C^d} & 0\\ 0 & -I_{\C^d}\end{pmatrix}\,\pm\iu\,\Big\|_{\B(\C^{2d})}\\
=\|(G_{11}(0)+(4m/h)^\gamma)^2+G_{12}(0)G_{12}(0)^*+1\|_{\B(\C^{d})}^{1/2}.
\end{multline*}
Inserting this, \eqref{eq:A_bound}, and $\|B(\pi/h,\ldots,\pi/h)\|_{\B(\C^{2d})}=(4m/h)^\gamma$ to \eqref{eq:ABD} we arrive at
\begin{multline*}
\|(ABD)(\pi/h,\ldots,\pi/h)\|_{\B(\C^{2d})} \\
\geq \frac{(4m)^\gamma}{\|G_0(0)^2+1\|_{\B(\C^{2d})}^{1/2}
\|(h^\gamma G_{11}(0)+(4m)^\gamma)^2+ h^{2\gamma}(G_{12}(0)G_{12}(0)^*+1)\|_{\B(\C^{d})}^{1/2}} \\
\geq \frac{(4m)^\gamma}{\|G_0(0)^2+1\|_{\B(\C^{2d})}^{1/2}
\|(G_{11}(0)+(4m)^\gamma)^2+ G_{12}(0)G_{12}(0)^*+1\|_{\B(\C^{d})}^{1/2}},
\end{multline*}
where the last estimate holds for every $h\in(0,1)$. This shows \eqref{eq:pointwise_divergence}.

Now, recall $K_h J_h=I_{\Hh^{2d}}$, and that both $K_h$ and $J_h$ are bounded, uniformly in $h>0$. Therefore, for every $E\in\B(\Hh^{2d})$, we get $\|E\|=\|K_h J_h E K_h J_h\|\leq C\|J_h E K_h\|$, from which it follows that $\|J_h E K_h\|\geq C\|E\|$. Choosing 
$$E=\Fh^*((G_h\pm\iu)^{-1}-(G_h^+\pm\iu)^{-1})\Fh=(H_h\pm\iu)^{-1}-(H_h^+\pm\iu)^{-1}$$
and taking into account unitarity of $\Fh$ jointly with \eqref{eq:pointwise_divergence}, we get 
\begin{equation*}
\|J_h((H_h\pm\iu)^{-1}-(H_h^+\pm\iu)^{-1})K_h\|\geq C
\end{equation*}
for all $h\in (0,1)$. Putting this together with \eqref{eq:res_est_+} yields \eqref{eq:div_result}.
\end{proof}

On the other hand, $H_h$  still converges to $H_0$ in the \emph{generalized strong} resolvent sense, even when the assumption \eqref{eq:G_11_pos} is dropped. 

\begin{theorem} \label{theo:strong_conv}
There exists $C>0$ such that for $h>0$ and $h\xi\in[-3\pi/2,3\pi/2]^m$, we have
\begin{equation} \label{eq:pointwise_strong}
\|((G_h(\xi)\pm\iu)^{-1}-(G_h^+(\xi)\pm\iu)^{-1})(G_0(\xi)\pm\iu)^{-1}\|_{\B(\C^{2d})}\leq Ch^\gamma.
\end{equation}
Consequently, if $2\gamma>\max\{\beta,0\}+1$, then there exist $C>0$ and $h_0>0$ such that for all $h\in(0,h_0)$, 
\begin{equation}  \label{eq:strong_conv_est}
\|(J_h(H_h\pm\iu)^{-1}K_h-(H_0\pm\iu)^{-1})(H_0\pm\iu)^{-1}\|\leq Ch^{\min\{2, 2\gamma-\max\{\beta,0\}-1\}}.
\end{equation}
In particular, for every $\psi\in\HH^{2d}$,
\begin{equation} \label{eq:strong_res_conv}
\lim_{h\to 0+}\|(J_h(H_h\pm\iu)^{-1}K_h-(H_0\pm\iu)^{-1})\psi\|=0.
\end{equation}
\end{theorem}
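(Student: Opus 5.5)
The plan is to work entirely at the level of symbols and then transport the bounds to operators with the frequency-domain representation \eqref{eq:q_def} used in the proof of Theorem \ref{theo:main}. Since \eqref{eq:G_11_pos} is to be dropped, the resolvent bound \eqref{eq:G+_res_est} is not available. The one ingredient that survives without positivity is the following. For $h\xi\in[-3\pi/4,3\pi/4]^m$ we have $|\sin(h\xi_j)|\geq Kh|\xi_j|$, so \eqref{eq:G_0_res_est} applied at the point $(\sin(h\xi_j)/h)_j$ gives
\begin{equation*}
\|(G_h(\xi)\pm\iu)^{-1}\|_{\B(\C^{2d})}\leq C(1+|\xi|^{2\gamma})^{-1/2}.
\end{equation*}
On the complementary region $h\xi\in[-3\pi/2,3\pi/2]^m\setminus[-3\pi/4,3\pi/4]^m$ we have $|\xi|\geq C/h$. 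There we only use $\|(G_h\pm\iu)^{-1}\|,\|(G_h^+\pm\iu)^{-1}\|\leq 1$ together with $\|(G_0(\xi)\pm\iu)^{-1}\|\leq C|\xi|^{-\gamma}\leq Ch^\gamma$, as in \eqref{eq:G_0_h_bound}.

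\emph{Proof of \eqref{eq:pointwise_strong}.} Write $A=(G_h\pm\iu)^{-1}$, $D=(G_h^+\pm\iu)^{-1}$, $R_0=(G_0\pm\iu)^{-1}$ and $B=f_h\,\mathrm{diag}(I,-I)$, so that $A-D=ABD$ by the second resolvent identity.
\begin{itemize}
\item In the outer region, $\|(A-D)R_0\|\leq 2\|R_0\|\leq Ch^\gamma$.
\item In the inner region, write $(A-D)R_0=ABDR_0$ and use $\|D\|\leq1$, $f_h(\xi)\leq Ch^\gamma|\xi|^{2\gamma}$ and the bounds on $\|A\|$ and $\|R_0\|$. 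This gives $Ch^\gamma|\xi|^{2\gamma}(1+|\xi|^{2\gamma})^{-1}\leq Ch^\gamma$.
\end{itemize}
The delicate point is precisely that $f_h R_0$ alone is only $O(1)$ near $|\xi|\sim 1/h$. One needs a second resolvent factor with decay, which is supplied by $A$ in the inner region and replaced by the trivial bound in the outer one.

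\emph{Proof of \eqref{eq:strong_conv_est}.} Rather than passing through Theorem \ref{theo:main}, which relies on \eqref{eq:G_11_pos}, I will estimate $S_h R_0$ directly, where $S_h:=A-R_0$.
\begin{itemize}
\item In the outer region, $\|S_hR_0\|\leq 2\|R_0\|\leq Ch^\gamma$.
\item In the inner region, $S_hR_0=A(G_0-G_h)R_0^2$. Taylor's formula gives $\sin(h\xi_j)/h=\xi_j+O(h^2|\xi_j|^3)$, and exactly as in \eqref{eq:g_h-g_0} every matrix element of $G_0-G_h$ is bounded by $Ch^2|\xi|^{3+\max\{\beta,0\}}$. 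Hence $\|S_hR_0\|\leq Ch^2|\xi|^{3+\beta^+}(1+|\xi|^{2\gamma})^{-3/2}$.
\item Using $h|\xi|\leq C$ to trade powers of $|\xi|$ for powers of $h^{-1}$, as in Proposition \ref{prop:pointwise_res_est}, this is at most $Ch^{\min\{2,2\gamma-\beta^+-1\}}$. This step is actually easier than before, thanks to the extra resolvent factor.
\end{itemize}

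\emph{Transport to operators.} Apply the formula for $\F(J_h(H_h\pm\iu)^{-1}K_h-J_hK_h(H_0\pm\iu)^{-1})\F^*$ to $R_0g$ in place of $g$.
\begin{itemize}
\item The $j=0$ term is controlled by the pointwise bound on $S_hR_0$ just obtained.
\item For $|j|=1$, use $\|S_h\|\leq 2$ together with $\|R_0(\xi+2\pi j/h)\|\leq Ch^\gamma$, since $|h\xi+2\pi j|\geq\pi/2$.
\item The remaining piece $(I-J_hK_h)R_0^2$ is $O(h^\gamma)$ as in the proof of Theorem \ref{theo:main}.
\end{itemize}
Combining these with \eqref{eq:beta_restriction2} yields \eqref{eq:strong_conv_est}.

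\emph{Strong convergence \eqref{eq:strong_res_conv}.} The family $J_h(H_h\pm\iu)^{-1}K_h-(H_0\pm\iu)^{-1}$ is uniformly bounded in $h$. By \eqref{eq:strong_conv_est} it converges to zero on $\operatorname{Ran}(H_0\pm\iu)^{-1}=\Dom H_0$, which is dense. A standard $\varepsilon/3$ argument then gives \eqref{eq:strong_res_conv} for all $\psi\in\HH^{2d}$.
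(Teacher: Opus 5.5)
Your proof is correct, and it is organised differently from the paper's in a way that matters.

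For \eqref{eq:pointwise_strong}, the paper uses one global product bound $\|A\|\,\|B\|\,\|D\|\,\|(G_0\pm\iu)^{-1}\|$ with the trivial $\|A\|\le C$, and takes the decay from $D$ via \eqref{eq:D_est}. You instead split the cell at $[-3\pi/4,3\pi/4]^m$ and take the decay from $A$ inside, needing only $\|D\|\le 1$.

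For \eqref{eq:strong_conv_est}, the paper inserts $H_h^+$. It bounds $J_h((H_h\pm\iu)^{-1}-(H_h^+\pm\iu)^{-1})K_h(H_0\pm\iu)^{-1}$ by $Ch^\gamma$ using \eqref{eq:pointwise_strong}, \eqref{eq:D_est} and periodicity of $G_h^+$. It then takes the rest from \eqref{eq:res_est_+} of Theorem \ref{theo:main}; this is what ``consequently'' refers to. You bypass $H_h^+$ altogether and estimate the symbol $((G_h\pm\iu)^{-1}-(G_0\pm\iu)^{-1})(G_0\pm\iu)^{-1}$ directly. The extra factor $(G_0\pm\iu)^{-1}$ disposes at once of the outer part of the $j=0$ term and of the aliasing terms $|j|=1$.

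The paper's route is shorter given Theorem \ref{theo:main}. However, through \eqref{eq:G+_res_est}, \eqref{eq:D_est} and \eqref{eq:res_est_+} it relies on \eqref{eq:G_11_pos}, so it does not actually support the remark before the theorem that \eqref{eq:G_11_pos} may be dropped. Moreover, \eqref{eq:G+_res_est} genuinely fails without \eqref{eq:G_11_pos}: for $m=d=1$, $G_{11}(\xi)=-\xi^2$, $G_{12}=0$ and $\gamma=2$, one gets $G_h^+(\xi)=0$ at $h\xi=2\arctan(1/2)$, where $|\xi|\sim 1/h$.

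Your argument never uses \eqref{eq:G_11_pos}. It therefore proves \eqref{eq:pointwise_strong}, \eqref{eq:strong_conv_est} and \eqref{eq:strong_res_conv} in the claimed generality, with \eqref{eq:pointwise_strong} now a side result rather than the engine.

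As a bonus, your $j=0$ estimate has room to spare: it gives $h^{\min\{2,3\gamma-\max\{\beta,0\}-1\}}$. Your route thus yields the rate $h^{\min\{2,\gamma,3\gamma-\max\{\beta,0\}-1\}}$. By \eqref{eq:beta_restriction2} this is never worse than the stated rate, and it can be better, e.g.\ $h^\gamma$ instead of $h^{2\gamma-1}$ when $\beta<0$.
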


\begin{proof}
Using the same notation as in the proof of Proposition \ref{prop:neg_result}, 
\begin{multline*}
\|((G_h(\xi)\pm\iu)^{-1}-(G_h^+(\xi)\pm\iu)^{-1})(G_0(\xi)\pm\iu)^{-1}\|_{\B(\C^{2d})}\\
\leq \|A(\xi)\|_{\B(\C^{2d})}\|B(\xi)\|_{\B(\C^{2d})}\|D(\xi)\|_{\B(\C^{2d})}\|(G_0(\xi)\pm\iu)^{-1}\|_{\B(\C^{2d})}.
\end{multline*}
Recall that, by \eqref{eq:G_0_res_est} and \eqref{eq:Cstar}, 
\begin{equation*}
\|(G_0(\xi)\pm\iu)^{-1}\|_{\B(\C^{2d})}\leq\frac{C}{(1+|\xi|^{2\gamma})^{1/2}} \quad(\forall\xi\in\R^m),
\end{equation*}
which also implies 
\begin{equation} \label{eq:A_bound2}
\|A(\xi)\|_{\B(\C^{2d})}=\|(G_h(\xi)\pm\iu)^{-1}\|_{\B(\C^{2d})}\leq C \quad(\forall\xi\in\R^m).
\end{equation}
Similarly, from \eqref{eq:G+_res_est} it follows that 
\begin{equation} \label{eq:D_est}
\|D(\xi)\|_{\B(\C^{2d})}=\|(G_h^+(\xi)\pm\iu)^{-1}\|_{\B(\C^{2d})}\leq\frac{C}{(1+|\xi|^{2\gamma})^{1/2}} \quad (\forall \, h\xi\in[-3\pi/2,3\pi/2]^m).
\end{equation}
Finally, since $\|B(\xi)\|_{\B(\C^{2d})}\leq(h|\xi|^2)^\gamma$, we conclude that
\begin{equation*}
\|((G_h(\xi)\pm\iu)^{-1}-(G_h^+(\xi)\pm\iu)^{-1})(G_0(\xi)\pm\iu)^{-1}\|_{\B(\C^{2d})}\leq \frac{C h^\gamma |\xi|^{2\gamma}}{1+|\xi|^{2\gamma}}\leq Ch^\gamma,
\end{equation*}
for all $h\xi\in[-3\pi/2,3\pi/2]^m$, which shows \eqref{eq:pointwise_strong}.

Now, we will prove that
\begin{equation*}
\|J_h((H_h\pm\iu)^{-1}-(H_h^+\pm\iu)^{-1})K_h(H_0\pm\iu)^{-1}\|\leq Ch^\gamma.
\end{equation*} 
This estimate combined with Theorem \ref{theo:main} and boundedness of $(H_0\pm\iu)^{-1}$ immediately imply \eqref{eq:strong_conv_est}. It can be shown by a little modification of the proof of Theorem \ref{theo:main}. Namely, in the expression for $q_{j,h}(\xi)$, see \eqref{eq:q_def}, $S_h(\xi)$ is now given by
\begin{equation*}
S_h(\xi)=((G_h(\xi)\pm\iu)^{-1}-(G_h^+(\xi)\pm\iu)^{-1})(G_0(\xi)\pm\iu)^{-1}.
\end{equation*}
Therefore, with the help of \eqref{eq:pointwise_strong}, we get
$|q_{0,h}(\xi)|\leq Ch^\gamma|g(\xi)|$.
To control $|q_{j,h}(\xi)|$ with $|j|=1$, we use the following estimate,
\begin{multline*}
\|S_h(\xi+2\pi j/h)\|_{\B(\C^{2d})}\\
\leq(\|(G_h(\xi+2\pi j/h)\pm\iu)^{-1}\|_{\B(\C^{2d})}+\|(G_h^+(\xi+2\pi j/h)\pm\iu)^{-1}\|_{\B(\C^{2d})})\\
\times \|(G_0(\xi+2\pi j/h)\pm\iu)^{-1}\|_{\B(\C^{2d})}.
\end{multline*}
Employing \eqref{eq:A_bound2}, periodicity of $G_h^+$ combined with
\eqref{eq:D_est}, and \eqref{eq:G_0_h_bound}, we get the bound
\begin{equation*}
\|S_h(\xi+2\pi j/h)\|_{\B(\C^{2d})}\leq (C+Ch^\gamma)Ch^\gamma\leq Ch^\gamma,
\end{equation*}
valid for all $\xi$ such that $h\xi\in[-3\pi/2,3\pi/2]^m$ and $h|\xi|\geq\pi/2$. The rest is identical to the proof of Theorem \ref{theo:main} but the convergence rate is now just $h^\gamma$.

Finally, to prove \eqref{eq:strong_res_conv}, put
\begin{equation*}
L_h:=J_h(H_h\pm\iu)^{-1}K_h-(H_0\pm\iu)^{-1},\quad \tilde L_h:=L_h(H_0\pm\iu)^{-1}.
\end{equation*}
For every $\psi\in\Dom (H_0)$, we get
\begin{equation*}
 \|L_h\psi\|_{\HH^{2d}}=\|\tilde L_h(H_0\pm i)\psi\|_{\HH^{2d}}\leq \|\tilde L_h\|\|(H_0\pm\iu)\psi\|_{\HH^{2d}}.
\end{equation*}
Consequently, $\lim_{h\to 0+}\|L_h\psi\|=0$. By density, for every $\psi\in\HH^{2d}$ there exists a sequence $(\psi_n)$ such that $\lim_{n\to\infty}\|\psi-\psi_n\|_{\HH^{2d}}=0$. Observing that $L_h$ is uniformly bounded for $h>0$, because
\begin{equation*}
\|L_h\|\leq\|J_h\|\|(H_h\pm\iu)^{-1}\|\|K_h\|+\|(H_0\pm\iu)^{-1}\|\leq C\|(G_h\pm\iu)^{-1}\|+C\leq C,
\end{equation*}
where we used the uniform boundedness of $J_h$ and $K_h$ and \eqref{eq:A_bound2}, we obtain
\begin{multline*}
\|L_h\psi\|_{\HH^{2d}}\leq\|L_h(\psi-\psi_n)\|_{\HH^{2d}}+\|L_h\psi_n\|_{\HH^{2d}}\\
\leq C\|\psi-\psi_n\|_{\HH^{2d}}+\|\tilde L_h\|\|(H_0\pm\iu)\psi_n\|_{\HH^{2d}}.
\end{multline*}
Fixing $n$ and then sending $h\to 0+$, we arrive at \eqref{eq:strong_res_conv}.
\end{proof}

\subsection{Examples}  \label{sec:examples}
Given the mass term $\mass\geq 0$, the symbol $G_0$ for the one-, two-, and three-dimensional \emph{Dirac operator} reads
\begin{equation*}
G_0(\xi)=\begin{pmatrix}
\mass & \xi_1 \\
\xi_1 & -\mass
\end{pmatrix},\quad
\begin{pmatrix}
\mass & \xi_1-\iu\xi_2 \\
\xi_1+\iu\xi_2 & -\mass
\end{pmatrix},\quad
\begin{pmatrix}
\mass I_{\C^2} & \xi\cdot\sigma \\
\xi\cdot\sigma & -\mass I_{\C^2}
\end{pmatrix},
\end{equation*}
respectively. Here, $\xi\cdot\sigma:=\sum_{i=1}^3\xi_i\sigma_i$, where $\sigma_i$ are the usual Pauli matrices. It is straightforward to check that all Assumptions \ref{ass:1} are satisfied with $\gamma=1$ and $\beta=0$. Furthermore, these symbols do not obey \eqref{eq:Gsymm}. Therefore, it makes no sense to use the discretized symbol $\tilde G_h$ given by \eqref{eq:tildeGh}. On the other hand, the convergence result \eqref{eq:res_est_+} for discretized operators derived from the symmetric difference holds true with the convergence rate $\mathcal{O}(h)$. Note that the needed correction term corresponds to replacing $\mass$ by $(\mass-h\Delta_h)$. Exactly the same results were derived before in \cite{CoGaJe_23} by exploring each dimension separately.

Another physically relevant example is the effective \emph{Hamiltonian of bilayer graph\-ene} whose symbol is given by
\begin{equation*}
 G_0(\xi_1,\xi_2)=\begin{pmatrix}
\mass & (\xi_1-\iu\xi_2)^2 \\
(\xi_1+\iu\xi_2)^2 & -\mass
\end{pmatrix},
\end{equation*}
where again $\mass\geq 0$. Assumptions \ref{ass:1} are now satisfied with $\gamma=2$ and $\beta=1$. Using \eqref{eq:res_est_+}, we get that the discretized operators derived from the symmetric difference converge to their continuum counterpart after adding the correction term $(h\Delta_h)^2 \sigma_3$ with the rate $\mathcal{O}(h^2)$. Note that different discretizations derived from backward and forward differences were studied in \cite{Ruben}. 

Let us remark that these two examples are special instances of the so-called \emph{abstract Dirac operator with supersymmetry}\cite[Section 5.4]{Thaller}, where, with the notation
\begin{equation*}
Q:=\begin{pmatrix}
0 & \F^*G_{12}\F\\
\F^*G_{12}^*\F & 0
\end{pmatrix},\quad
\tau=\begin{pmatrix}
I_{\HH^d} & 0\\
0 & -I_{\HH^d}
\end{pmatrix},
\end{equation*}
$Q$ is a supercharge with respect to $\tau$. In fact, any $H_0$ whose symbol satisfies (i), (ii), and (iii) is of that type.

Finally, if we put $d=1$ and $G_{12}=0$, we get $G_0=G_{11}\oplus(-G_{11})$, i.e., studying continuum limits of discretizations of $G_0$ is equivalent to the same problem for $G_{11}$, which is just multiplication by a scalar. The Assumptions \ref{ass:1} are satisfied if and only if $G_{11}$ is a real-valued differentiable function such that
\begin{align*}
&(\exists C,K,\gamma>0)(\forall\xi\in\R^m:\, |\xi|>K)(G_{11}(\xi)^2\geq C|\xi|^{2\gamma}),\\
&(\exists C,\beta>0)(\forall\xi\in\R^m)(|\nabla G_{11}(\xi)|\leq C|\xi|^\beta).
\end{align*}
These conditions are exactly the points (2) and (3) of Assumption 3.1 from \cite{CoGaJe_21}, where \eqref{eq:Gsymm} was imposed in addition. Therein, the convergence of discretized operators given by the symbol \eqref{eq:tildeGh} to the corresponding continuum operator was proven under the additional conditions $2\gamma> \max\{\beta, 0\}+1$ and $2\gamma\leq 3+\beta$ with the convergence rate $\mathcal{O}(h^{2\gamma-\max\{\beta, 0\}-1})$. In view of Remark \ref{rem:comparison}, \cite[Proposition 3.5]{CoGaJe_21} is a special case of  Theorem \ref{theo:main}.

\subsection{Discussion on the correction term} \label{sec:correction}
Recall that the only step, where we needed the correction term, was the proof of the crucial bound \eqref{eq:G+_res_est}. Without the correction term it is not possible to estimate the non-negative matrix $G_h(\xi)^2$ from below by $C|\xi|^\alpha$ for $h\xi\in(-\pi,\pi)^m$ with some $\alpha>0$. For the Dirac operator,  $\xi\mapsto G_h(\xi)^2$ essentially yields the dispersion relation, which then has a minimum not only at $\xi=0$ but also along the boundary of the cell $(-\pi/h,\pi/h)^m$. This effect is known from lattice quantum field theories as infamous Fermionic doubling, cf. also Nielsen-Ninomiya \emph{no-go} theorem for massless fermions on lattice \cite{NiNi_81}. It is typically suppressed by adding a correction (Wilson's) term or by introducing a staggered grid discretization (different components of wave functions live on mutually shifted lattices). See \cite{Na_24} for a mathematical justification of these approaches. Below, we will provide yet another argument why $G_h$ and also another typical discretization of Dirac operators (the "forward-backward model") are not, with a single exception of one dimension, the best possible approximations of their continuum counterparts.

Our direct but rather heuristic reasoning is based on the physical requirement that the square of the Dirac operator should be equal to $(-\Delta+\mass^2)I_{\C^{2d}}$, combined with the empirical fact that the optimal discretization of the Laplacian is formed by the second standard central differences, which are given by the composition of the forward and backward difference operators, see \eqref{eq:disc_laplacian}. 
First, let us look at two commonly used discretized versions of the one-dimensional Dirac operator, namely,
\begin{equation*}
D_h^{\textup{symm}}\equiv H_h=\begin{pmatrix}
\mass & -\iu d_h\\
-\iu d_h & -\mass
\end{pmatrix}, \quad
D_h^{\textup{fb}}=\begin{pmatrix}
\mass & -\iu d_h^+\\
-\iu d_h^- & -\mass
\end{pmatrix}.
\end{equation*}
For their squares, we get $(D_h^{\textup{symm}})^2=(-d_h^2+\mass^2)I_{\C^2}$ and $(D_h^{\textup{fb}})^2=(-\Delta_h+\mass^2)I_{\C^2}$, respectively. Therefore, one would guess, that $D_h^{\textup{fb}}$  approximates the "continuous" Dirac operator better than $D_h^{\textup{symm}}$. Indeed, $D_h^{\textup{fb}}$ converges to its continuum counterpart in the generalized norm resolvent sense as $h\to 0+$, the correction term is not needed as opposed to $D_h^{\textup{symm}}$ \cite{CoGaJe_23}. In two dimensions, we will consider the following discretized Dirac operators,
\begin{align*}
&D_{2D,h}^{\textup{symm}}\equiv H_h=\begin{pmatrix}
\mass & -\iu d_{h,x}-d_{h,y}\\
-\iu d_{h,x}+d_{h,y} & -\mass
\end{pmatrix}, \\
&D_{2D,h}^{\textup{fb}}=\begin{pmatrix}
\mass & -\iu d_{h,x}^+ -d_{h,y}^-\\
-\iu d_{h,x}^- +d_{h,y}^+ & -\mass
\end{pmatrix}.
\end{align*}
A straightforward calculation yields
\begin{equation*}
(D_{2D,h}^{\textup{symm}})^2=(-d_{h,x}^2-d_{h,y}^2+\mass^2)I_{\C^2}, \, (D_{2D,h}^{\textup{fb}})^2=(-\Delta_h+\mass^2-\iu d_{h,x}^+ d_{h,y}^- + \iu d_{h,y}^+ d_{h,x}^-)I_{\C^2}.
\end{equation*}
In either case we do not get the optimal square, i.e.  $(-\Delta_h+\mass^2)I_{\C^2}$. The generalized norm resolvent convergence of both discretizations to the "continuous" two-dimensional Dirac operator is only possible after the addition of the correction term \cite{CoGaJe_23}. The three-dimensional setting is similar to the two-dimensional case.

Finally, note that the concrete form of the correction term is not important-only certain abstract properties of its symbol matter. Our choice is motivated by a nice explicit form of the correction term in the coordinate representation.

\section{Convergence of operators on varying Hilbert spaces} \label{sec:convergence}
The aim of this section is to recall several concepts of generalized convergence for a sequence of linear operators acting on varying Hilbert spaces and to show that our convergence results are just their special instances. First, we will present a generalization of the concept of Weidmann which is based on identification of varying Hilbert spaces with a common (\emph{parent}) Hilbert space via isometries \cite{Weidmann,We_84}. Next, we will recall a more recent approach developed by Post that is built on the notion of \emph{quasi-unitary equivalence} (QUE), a qualitative generalization of unitary equivalence \cite{Po_06,Olaf}. When the Hilbert spaces are not varying, the both concepts reduce to the usual norm resolvent convergence. Moreover, they are equivalent, but a loss of convergence speed may occur, when passing from QUE-convergence to generalized Weidmann's convergence \cite[Theorems 1.7 and 2.8]{PostZimmer2022}. To prevent this loss, a modified definition of quasi-unitary equivalence has been introduced very recently by Post and Zimmer in \cite{PoZi_25}. However, we will work with the original definition of QUE-convergence, because in our setting it is equivalent to that of Weidmann including the convergence speed. Finally, we will present yet another remarkable concept introduced by Barker and motivated by the quantum mechanical postulates saying that we do not observe operators nor state vectors but only certain scalar products.

\subsection{Generalized norm resolvent convergence} 

\begin{definition}[generalized Weidmann's convergence]

Let $A_n$ be a self-adjoint operator on a Hilbert space $\Gn$ for $n \in \N \cup \{\infty\}$. We say that the sequence $(A_n)_{n\in\N}$ converges to $A_\infty$ in the \emph{generalized norm resolvent sense of Weidmann} if and only if there exist a Hilbert space $\mathscr{G}$,  called parent space, and for each $n \in \N \cup \{\infty\}$ an isometry $\iota_n : \Gn \to \mathscr{G}$ such that $\lim_{n\to \infty}\delta_n=0$, where
    $$\delta_n:=\|\iota_n (A_n - z)^{-1} \iota^*_n - \iota_{\infty} (A_\infty - z)^{-1} \iota_{\infty}^* \|_{\mathscr{B}(\mathscr{G})}$$
and where $z$  is a common resolvent element $z \in \bigcap_{n \in \N \cup \{\infty\}}\varrho(A_n)$.
For brevity, we write $A_n\underset{\text{Weid}}{\overset{\text{gnrc}}{\to}} A_\infty$ and call $(\delta_n)_n$ the convergence speed. 
    
\end{definition}

\begin{definition}[QUE-convergence]
\label{def:QUE_convergence}

Let $A_n$ be a self-adjoint operator on a Hilbert space $\Gn$ for $n \in \N \cup \{\infty\}$. We say that the sequence $(A_n)_n$ converges to $A_\infty$ in the \emph{QUE-generalized norm resolvent sense} if and only if there exist $z \in \bigcap_{n \in \N \cup \{\infty\}}\varrho(A_n)$ and for each $n \in \N \cup \{\infty\}$ a bounded operator $J_n:\Gn \to \mathscr{G}_\infty$ and $\delta_n\geq 0$ such that 
the following conditions are true:
\begin{gather*}
\lim_{n\to \infty}\delta_n=0,\quad
\|J_n\|_{\mathscr{B}(\Gn, \mathscr{G}_\infty)} \leq 1 + \delta_n,\\
\| (I_{\Gn} - J^*_n J_n)(A_n-z)^{-1}\|_{\mathscr{B}(\Gn)} \leq \delta_n , \quad
\|(I_{\mathscr{G}_\infty} - J_n J^*_n)(A_\infty-z)^{-1}\|_{\mathscr{B}(\mathscr{G}_\infty)} \leq \delta_n,\\
\|(A_\infty-z)^{-1} J_n - J_n (A_n-z)^{-1}\|_{\mathscr{B}(\Gn, \mathscr{G}_\infty)} \leq \delta_n.
\end{gather*}
If $z \in \C \setminus \R,$ we require that the above also holds for $z^*$. For short, we write $A_n\underset{\text{QUE}}{\overset{\text{gnrc}}{\to}} A_\infty$ and call $(\delta_n)_n$ the convergence speed.

\end{definition}

The following result is just a condensed version of \cite[Theorems 2.4, 2.8, and 2.12]{PostZimmer2022}.

\begin{theorem}
 Let $A_n$ be a self-adjoint operator on a Hilbert space $\Gn$ for $n \in \N \cup \{\infty\}$. If  $A_n\underset{\text{Weid}}{\overset{\text{gnrc}}{\to}} A_{\infty}$  then $A_n \underset{\text{QUE}}{\overset{\text{gnrc}}{\to}} A_{\infty}$ with the same convergence speed. If $A_n \underset{\text{QUE}}{\overset{\text{gnrc}}{\to}} A_{\infty}$ with the convergence speed $\delta_n$ then $A_n\underset{\text{Weid}}{\overset{\text{gnrc}}{\to}} A_{\infty}$ with the convergence speed proportional to $\delta_n^{1/2}$. If $A_n \underset{\text{QUE}}{\overset{\text{gnrc}}{\to}} A_{\infty}$ and a parent space exists that factorizes the identification operators, i.e., $J_n$ are of the form $J_n=\iota_\infty^*\iota_n$, in a way that $\iota_n\iota_n^*$ and $\iota_\infty\iota_\infty^*$ commute, then $A_n\underset{\text{Weid}}{\overset{\text{gnrc}}{\to}} A_{\infty}$ with the same convergence speed (modulo multiplicative constant).
\end{theorem}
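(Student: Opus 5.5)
The statement bundles three implications, and I would prove each one directly, following Post and Zimmer. Throughout I write $R_n:=(A_n-z)^{-1}$ and $P_n:=\iota_n\iota_n^*$, which is the orthogonal projection onto $\operatorname{ran}\iota_n$ in the parent space. I also use repeatedly that all the conditions for $z^*$ follow by taking adjoints, since $R_n(z)^*=R_n(z^*)$.

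\emph{Weidmann $\Rightarrow$ QUE with the same speed.} Put $J_n:=\iota_\infty^*\iota_n$, so that $\|J_n\|\le 1$.
\begin{itemize}
\item For the intertwining condition, use $\iota_n^*\iota_n=I$ and $\iota_\infty^*\iota_\infty=I$ to write
$$R_\infty J_n-J_nR_n=\iota_\infty^*\bigl(\iota_\infty R_\infty\iota_\infty^*-\iota_nR_n\iota_n^*\bigr)\iota_n,$$
whose norm is at most $\delta_n$.
\item For the first defect condition, note that $I-J_n^*J_n=\iota_n^*(I-P_\infty)\iota_n$ and $\iota_nR_n=\iota_nR_n\iota_n^*\iota_n$. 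Insert $\pm\iota_\infty R_\infty\iota_\infty^*\iota_n$ and use $(I-P_\infty)\iota_\infty=0$. This gives $\|(I-J_n^*J_n)R_n\|\le\delta_n$.
\item The condition $\|(I-J_nJ_n^*)R_\infty\|\le\delta_n$ follows by the symmetric computation with the roles of $n$ and $\infty$ exchanged.
\end{itemize}

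\emph{QUE $\Rightarrow$ Weidmann with speed $\sim\delta_n^{1/2}$.} I would use a Halmos-type dilation.
\begin{itemize}
\item First replace $J_n$ by $J_n/(1+\delta_n)$. This makes it a contraction and changes all the QUE quantities only by $O(\delta_n)$.
\item Take the parent space $\mathscr G:=\mathscr G_\infty\oplus\mathscr G_n$. Set $\iota_\infty f:=(f,0)$ and $\iota_n u:=(J_nu,W_nu)$ with $W_n:=(I-J_n^*J_n)^{1/2}$. Both maps are isometries, since $J_n^*J_n+W_n^2=I$.
\item The operator $\iota_nR_n\iota_n^*-\iota_\infty R_\infty\iota_\infty^*$ has the block entries $J_nR_nJ_n^*-R_\infty$, $J_nR_nW_n$, $W_nR_nJ_n^*$ and $W_nR_nW_n$.
\item For the first block, write $J_nR_nJ_n^*-R_\infty=(J_nR_n-R_\infty J_n)J_n^*+R_\infty(J_nJ_n^*-I)$. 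The second summand is the adjoint of $(J_nJ_n^*-I)R_\infty(z^*)$, so this block is $O(\delta_n)$.
\item For the blocks containing $W_n$, the key estimate is
$$\|W_nR_n\|^2=\|R_n^*W_n^2R_n\|\le\|R_n\|\,\|(I-J_n^*J_n)R_n\|\le C\delta_n,$$
and similarly $\|R_nW_n\|^2\le C\delta_n$ via $z^*$.
\end{itemize}
The square root enters exactly at this last step. This step is the main obstacle: one has to check that no better bound is available without extra structure, and that the remaining blocks are really controlled by it.

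\emph{Factorized case: no loss of speed.} Assume $J_n=\iota_\infty^*\iota_n$ with $[P_n,P_\infty]=0$.
\begin{itemize}
\item Then $Q_n:=I-J_n^*J_n=\iota_n^*(I-P_\infty)\iota_n$. Because $P_n(I-P_\infty)P_n=P_n(I-P_\infty)$ is an orthogonal projection, $Q_n$ is itself an orthogonal projection in $\mathscr G_n$.
\item Hence $\|(I-P_\infty)\iota_nR_nu\|=\|Q_nR_nu\|\le\delta_n\|u\|$ holds without any square root. The same argument gives $\|(I-P_n)\iota_\infty R_\infty\|\le\delta_n$, again using $z^*$ where needed.
\item Now decompose $\iota_nR_n\iota_n^*-\iota_\infty R_\infty\iota_\infty^*$ using the commuting projections $P_nP_\infty$, $P_n(I-P_\infty)$ and $(I-P_n)P_\infty$.
\item On $P_nP_\infty$ the difference reduces to $\iota_\infty(J_nR_nJ_n^*-R_\infty)\iota_\infty^*$, which is $O(\delta_n)$ as in the previous part.
\item The off-diagonal pieces are controlled by the projection estimates in the second bullet.
\end{itemize}
This yields Weidmann convergence with speed $C\delta_n$.
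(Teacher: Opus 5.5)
Your three arguments reconstruct the Post--Zimmer results that the paper only cites; the paper gives no proof of its own. The first and third parts are correct. In particular, commutativity makes $Q_n=\iota_n^*(I-P_\infty)\iota_n$ a projection, and this is exactly what turns $\|(I-P_\infty)\iota_nR_nu\|^2=\langle R_nu,Q_nR_nu\rangle$ into $\|Q_nR_nu\|^2$ and removes the square root. In the factorized case, however, you also need the diagonal blocks on $P_n(I-P_\infty)$ and $(I-P_n)P_\infty$, not only the off-diagonal ones; they follow from the same projection estimates. A cleaner route is the split $D=P_\infty DP_\infty+(I-P_\infty)D+P_\infty D(I-P_\infty)$, where $D=\iota_nR_n\iota_n^*-\iota_\infty R_\infty\iota_\infty^*$. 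Since $(I-P_\infty)\iota_\infty=0$, the last two terms are bounded by $\|(I-P_\infty)\iota_nR_n\|$ at $z$ and at $z^*$, and the first term is $\iota_\infty(J_nR_nJ_n^*-R_\infty)\iota_\infty^*$.

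The one real defect is in the second part. Weidmann's definition, as stated in the paper, requires a single parent space $\mathscr{G}$ and a single isometry $\iota_\infty$ for all $n$. Your $\mathscr{G}_\infty\oplus\mathscr{G}_n$ and $\iota_\infty f=(f,0)$ change with $n$, so as written they do not give Weidmann convergence. The repair is to take $\mathscr{G}:=\mathscr{G}_\infty\oplus\bigoplus_{k\in\N}\mathscr{G}_k$, with $\iota_\infty f=(f,0,0,\dots)$ and $\iota_nu=(J_nu,0,\dots,0,W_nu,0,\dots)$, where $W_nu$ sits in the $n$-th slot. Then $\iota_nR_n\iota_n^*-\iota_\infty R_\infty\iota_\infty^*$ acts only between $\mathscr{G}_\infty$ and the $n$-th summand, with exactly your four blocks.

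There is also an unstated step. You use $\sup_n\|R_n\|<\infty$ twice: when rescaling $J_n$ by $(1+\delta_n)^{-1}$, where the defect changes by $\|J_n\|^2(1-(1+\delta_n)^{-2})\|R_n\|$, and in $\|W_nR_n\|^2\le\|R_n\|\,\|(I-J_n^*J_n)R_n\|$. This bound is automatic for $z\notin\R$. For real $z$ you must derive it from the QUE conditions, e.g. $\|R_n\|\le\|(I-J_n^*J_n)R_n\|+\|J_n\|\,\|J_nR_n\|\le\delta_n+(1+\delta_n)\bigl((1+\delta_n)\|R_\infty\|+\delta_n\bigr)$.

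Finally, you do not need to check that no better bound is available. The claim only asks for the upper bound $C\delta_n^{1/2}$, which your block estimates already give.
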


We now apply the preceding result to our setting. 
Let $(h_n)_{n\in\N}$ be an arbitrary positive sequence such that $\lim_{n\to\infty}h_n=0$. 
We set the sequence of the Hilbert spaces and the corresponding operators to be
\begin{align*}
    \Gn = \mathscr{H}_{h_n}^{2d}, \quad A_n = \tilde{H}_{h_n}\,\text{ or }\, A_n = H^+_{h_n} \quad (\forall n\in \N \cup \{\infty\}) .
\end{align*}

\begin{corollary} \label{cor:different_convergence}
  Let $H_0$, $\tilde H_{h_n}$, and $H^+_{h_n}$ be given by \eqref{eq:Hh0_def}, \eqref{eq:Hh_def}, and \eqref{eq:Hh+_def}, respectively.
    Under the assumptions of Theorem \ref{theo:main} and assuming that the generating functions $\phi_0,\, \psi_0$ used in \eqref{eq:generating_functions} are equal, we have, as $n \to \infty$ ,
    \begin{align*}
        \tilde H_{h_n}\underset{\text{Weid}}{\overset{\text{gnrc}}{\to}} H_0, \quad \text{ and } \quad \tilde H_{h_n}\underset{\text{QUE}}{\overset{\text{gnrc}}{\to}} H_0 \quad \text{with} \quad \delta_{n} = C{h}_n^{\min\{2, 2\gamma-\max\{\beta,0\}-1\}},
        \\
         H^+_{h_n}\underset{\text{Weid}}{\overset{\text{gnrc}}{\to}} H_0, \quad \text{ and } \quad H^+_{h_n}\underset{\text{QUE}}{\overset{\text{gnrc}}{\to}} H_0 \quad \text{with} \quad \delta_{n} = C{h}_n^{\min\{2, 2\gamma-\max\{\beta,0\}-1\}}.
    \end{align*}
\end{corollary}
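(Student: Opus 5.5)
With $\phi_0=\psi_0$ we have $K_h=J_h^*$, $J_h$ an isometry, and $J_hK_h=J_hJ_h^*$ an orthogonal projection in $\HH^{2d}$. Hence the natural choice for the Weidmann convergence is the parent space $\mathscr G:=\HH^{2d}=\GG$ with $\iota_\infty:=I_{\HH^{2d}}$ and $\iota_n:=J_{h_n}$. With these isometries,
\begin{equation*}
\delta_n=\|J_{h_n}(A_n-z)^{-1}J_{h_n}^*-(H_0-z)^{-1}\|
=\|J_{h_n}(A_n-z)^{-1}K_{h_n}-(H_0-z)^{-1}\|,
\end{equation*}
where $z=\pm\iu$ is common to all resolvent sets because every operator involved is self-adjoint. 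Theorem \ref{theo:main} then gives $\delta_n\le Ch_n^{\min\{2,2\gamma-\max\{\beta,0\}-1\}}$ for $h_n<h_0$, simultaneously for $z=\iu$ and $z=-\iu$, for both $A_n=\tilde H_{h_n}$ and $A_n=H^+_{h_n}$. For the finitely many $n$ with $h_n\ge h_0$ we enlarge $C$, which is possible since all resolvents are bounded uniformly by $1$ and $\|J_h\|=\|K_h\|=1$.

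For QUE-convergence we apply the third part of the theorem quoted from \cite{PostZimmer2022}, which preserves the speed. The identification operators are $J_n:=\iota_\infty^*\iota_n=J_{h_n}$, and $\iota_n\iota_n^*=J_{h_n}J_{h_n}^*$ commutes with $\iota_\infty\iota_\infty^*=I$ trivially. Applying the first part instead gives the same speed directly, so either route works.

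As a sanity check, one can also verify the QUE conditions by hand. First, $\|J_n\|=1$. Second, $I-J_n^*J_n=I-K_{h_n}J_{h_n}=0$. For the third condition, write $(I-J_nJ_n^*)(H_0-z)^{-1}=(I-J_nJ_n^*)\bigl[(H_0-z)^{-1}-J_n(A_n-z)^{-1}J_n^*\bigr]$, which is bounded by $\delta_n$. For the last condition, write $(H_0-z)^{-1}J_n-J_n(A_n-z)^{-1}=\bigl[(H_0-z)^{-1}-J_n(A_n-z)^{-1}J_n^*\bigr]J_n$, using $J_n^*J_n=I$; its norm is again at most $\delta_n$.

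The only point needing care is that these identifications are legitimate. Orthonormality of $(h^{-m/2}\phi_{h,k})_k$ follows from biorthogonality when $\phi_0=\psi_0$, and this gives $K_h=J_h^*$ and $K_hJ_h=I$. The rest is bookkeeping of constants.
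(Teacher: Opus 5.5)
Your proof is correct and matches the paper's. Weidmann convergence follows from Theorem \ref{theo:main} with parent space $\HH^{2d}$, $\iota_\infty=I_{\HH^{2d}}$ and $\iota_n=J_{h_n}$, using $K_{h_n}=J_{h_n}^*$. QUE-convergence with the same speed then comes from the first part of the quoted Post--Zimmer theorem, and your direct check of the four QUE conditions confirms it independently. One small slip: the third part of that theorem goes from QUE to Weidmann, so it cannot be used to obtain QUE-convergence. The first part, or your hand verification, is the valid route.
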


\begin{proof}

Recall that if $\phi_0=\psi_0$, then $J_{h_n}$ is an isometry and $K_{h_n}=J_{h_n}^*$. Hence, Theorem  \ref{theo:main} immediately yields that $\tilde H_{h_n}\underset{\text{Weid}}{\overset{\text{gnrc}}{\to}} H_0$ and $H^+_{h_n}\underset{\text{Weid}}{\overset{\text{gnrc}}{\to}} H_0$ with $z=\pm\iu$, $\mathscr{G} = \mathscr{G}_{\infty}= \mathscr{H}^{2d}$, $\mathscr{G}_n=\HH_{h_n}^{2d}$, $\iota_\infty=I_{\HH^{2d}}$, $\iota_n=J_{h_n}$, and $\delta_{n}=C{h}_n^{\min\{2, 2\gamma-\max\{\beta,0\}-1\}}$.

\end{proof}

\subsection{Convergence in the sense of Barker}
In his series of papers \cite{Ba_01}, \cite{Ba2_01}, and \cite{Ba3_01}, Barker introduced an abstract framework how to realize dynamical systems on  "continuous" Hilbert spaces as limits of dynamical systems on "discrete" Hilbert spaces. We will demonstrate that the discretization scheme adopted in the present paper fits within this framework. Furthermore, we will investigate the question of how the generalized strong/norm resolvent convergence is related to the convergence of operators in the vein of Barker. We start by recalling a couple of definitions. Since Barker formulated his results for nets, within this subsection, $\NN$ stands for a directed set, $\Gn,\, n\in\NN,$ and  $\GG$ are Hilbert spaces with inner products $\langle\cdot,\cdot\rangle_n$ and $\langle\cdot,\cdot\rangle_\infty$, respectively.

\begin{definition} \label{eq:def_resolution}
Let $\res_n:\Sub\to\Gn,\, n\in\NN,$ be linear mappings on a dense subspace $\Sub$ of $\GG$ such that 
\begin{equation*}
\langle\phi,\chi\rangle_\infty=\lim_{n\in\NN}\langle\res_n\phi,\res_n\chi\rangle_n \quad (\forall{\phi,\chi\in\Sub}).
\end{equation*}
Then $(\Gn,\res_n)_{n\in\NN}$ is called an \emph{inductive resolution} of $\GG$.
\end{definition}

\begin{definition} 
We say that the net $(\chi_n)$ \emph{converges} to $\chi_\infty\in\GG$ in the sense of Barker if and only if the norms $\|\chi_n\|$ are eventually bounded, i.e. bounded for sufficiently large $n$, and for all $\phi\in\Sub$
\begin{equation*}
\langle\phi,\chi_\infty\rangle_\infty=\lim_{n\in\NN}\langle\res_n\phi,\chi_n\rangle_n.
\end{equation*}
If this is the case, we call $\chi_\infty$ the \emph{limit} of $(\chi_n)$ and we write $\chi_\infty=\wtlim_{n\in\NN}\chi_n$.
\end{definition}

Note that given any $\chi\in\Sub$, $\chi=\wtlim_{n\in\NN}\res_n\chi$, see \cite[Remark 2.1]{Ba_01}.

\begin{definition} \label{def:op_conv}
Let $R_n,\, n\in\NN,$ and $R_\infty$ be bounded operators on $\Gn$ and $\GG$, respectively. We say that the net $(R_n)$ converges to $R_\infty$ \emph{in the sense of Barker} if and only if the norms $\|R_n\|$ are eventually bounded and, for all $\chi_\infty\in\GG$ and all nets $(\chi_n)$ such that $\wtlim_{n\in\NN}\chi_n=\chi_\infty$, $R_\infty\chi_\infty=\wtlim_{n\in\NN}(R_n\chi_n)$.
We write $\wtlim_{n\in\NN}R_n=R_\infty$ for brevity.
\end{definition}

Now, let $\NN$ be $(0,+\infty)$ arranged in descending order. Hence, we may write $\wtlim_{h\to 0}$ (the right-hand limit) for $\wtlim_{n\in\NN}$. Furthermore, we will always assume in this subsection that $\phi_0=\psi_0$, which, in particular, implies $K_h=J_h^*$, see the last paragraph of Section \ref{subsec:JK_operators}. Under these conditions we get the following observations. 

\begin{proposition} \label{prop:resolution}
If $\phi_0=\psi_0$ then $(\Hh^{l},K_h)_{h>0}$ is an inductive resolution of $\HH^{l}$. For $\Sub$ from Definition \ref{eq:def_resolution}, we may take $\Sub=\HH^{l}$.
\end{proposition}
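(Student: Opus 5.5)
We must show that $\langle f,g\rangle_{\HH^l}=\lim_{h\to0+}\langle K_hf,K_hg\rangle_{\Hh^l}$ for all $f,g\in\HH^l$. With $\phi_0=\psi_0$ we have $K_h=J_h^*$, and therefore
\begin{equation*}
\langle K_hf,K_hg\rangle_{\Hh^l}=\langle f,J_hK_hg\rangle_{\HH^l}.
\end{equation*}
It therefore suffices to prove that the orthogonal projections $P_h:=J_hK_h$ converge strongly to $I_{\HH^l}$ as $h\to0+$. Everything acts componentwise, so we may take $l=1$. Since $\|P_h\|=1$ uniformly in $h$, an $\epsilon/3$ density argument reduces the claim to $\|(I-P_h)g\|\to0$ for $g$ in a dense set. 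We take $g$ with $\hat g\in C_c^\infty(\R^m)$.

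The estimate on the dense set comes from the frequency-domain representation of $J_hK_h$ already used in the proof of Theorem \ref{theo:main}. This is the formula (3.21) of \cite{CoGaJe_21} with the resolvents removed:
\begin{equation*}
(\F J_hK_h\F^*\hat g)(\xi)=(2\pi)^m\sum_{j\in\Z^m}\hat\phi_0(h\xi)\overline{\hat\phi_0(h\xi+2\pi j)}\,\hat g(\xi+2\pi j/h).
\end{equation*}
Fix $R$ with $\supp\hat g\subset\{|\xi|\le R\}$. For $h$ small, the terms with $j\neq0$ can be nonzero only when $|\xi+2\pi j/h|\le R$. For $|j|=1$ this forces $|h\xi|\ge 2\pi-hR>3\pi/2$, where $\hat\phi_0(h\xi)=0$ by \eqref{eq:supports_ass}. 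Hence only the $j=0$ term survives, and
\begin{equation*}
\F(I-P_h)\F^*\hat g(\xi)=\big(1-(2\pi)^m|\hat\phi_0(h\xi)|^2\big)\hat g(\xi).
\end{equation*}

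The main point is to show that $(2\pi)^m|\hat\phi_0(\eta)|^2=1$ on $[-\pi/2,\pi/2]^m$, or at least that this holds near $\eta=0$ with $\hat\phi_0$ continuous there. Once this is known, for $h\le\pi/(2R)$ we get $h\xi\in[-\pi/2,\pi/2]^m$ on $\supp\hat g$, so $(I-P_h)g=0$ and the convergence is trivial.

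For this key identity I would use orthonormality of $(h^{-m/2}\phi_{h,k})_k$. In Fourier terms it is equivalent to
\begin{equation*}
(2\pi)^m\sum_{j\in\Z^m}|\hat\phi_0(\eta+2\pi j)|^2=1\quad\text{for a.e. }\eta.
\end{equation*}
This is condition (2.8) of \cite{CoGaJe_21}, which the Example identifies as equivalent to orthonormality. By the support condition \eqref{eq:supports_ass}, for $\eta\in[-\pi/2,\pi/2]^m$ and $j\neq0$ we have $|\eta+2\pi j|\ge 3\pi/2$ in some coordinate. With the support being the closed cube, this only leaves a null boundary set, and the identity follows almost everywhere on that cube.

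If the support subtlety causes trouble, I would instead use a cruder dominated-convergence route. The factor $1-(2\pi)^m|\hat\phi_0(h\xi)|^2$ is bounded and tends to $0$ a.e. by the periodization identity applied near $0$, so $\|(I-P_h)g\|\to0$ for each $g$ in the dense set, and the density argument above finishes the proof for all $f,g\in\HH^l$.
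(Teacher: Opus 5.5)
Your proof is correct. It rests on the same two ingredients as the paper's proof: the frequency-domain formula for $J_hK_h$, and the identity $(2\pi)^m|\hat\phi_0(\eta)|^2=1$ a.e.\ on $[-\pi/2,\pi/2]^m$, which follows from (2.8) of \cite{CoGaJe_21} and \eqref{eq:supports_ass}.

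The difference is in how the argument is organised.
\begin{itemize}
\item The paper proves the weak statement $\langle\phi,J_hK_h\chi\rangle\to\langle\phi,\chi\rangle$ directly for arbitrary $\phi,\chi\in\HH^l$. It splits the frequency integral into two parts. On $M_h=[-\frac{\pi}{2h},\frac{\pi}{2h}]^m$ the integrand reduces to $\overline{\F\phi}\cdot\F\chi$ and dominated convergence applies. The complement is bounded by $C\|\F\phi\|_{L^2(\R^m\setminus M_h)}\|\F\chi\|$. Strong convergence of $J_hK_h$ is deduced only afterwards, in Corollary \ref{cor:slim}, from the fact that $J_hK_h$ is an orthogonal projection.
\item You reverse this order. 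For $g$ with compactly supported $\hat g$ you get $(I-J_hK_h)g=0$ exactly once $h$ is small. The uniform bound on $\|I-J_hK_h\|$ then extends this to strong convergence on all of $\HH^l$, and the proposition follows.
\end{itemize}

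What each route buys:
\begin{itemize}
\item Your route yields Corollary \ref{cor:slim} immediately and avoids the tail estimate. In fact the density step uses only $\sup_h\|J_hK_h\|<\infty$. So, with biorthogonality in place of orthonormality, the same argument gives $\slim_{h\to0+}J_hK_h=I_{\HH^l}$ even when $\phi_0\neq\psi_0$.
\item The paper's route needs no density argument. It also gives, for each pair $\phi,\chi$, an explicit error bound in terms of the Fourier tail of $\phi$.
\end{itemize}

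Two small points.
\begin{itemize}
\item When you discard the terms with $j\neq 0$, argue coordinatewise for every such $j$. Pick $i$ with $j_i\neq0$; then $|h\xi_i|\ge 2\pi-hR>3\pi/2$, so $h\xi\notin[-3\pi/2,3\pi/2]^m$. A Euclidean bound $|h\xi|>3\pi/2$ on its own would not place $h\xi$ outside the cube.
\item Your fallback paragraph is unnecessary. The boundary issue with the closed support concerns only a null set, which your main argument already handles.
\end{itemize}
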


\begin{proof}
Using the formula (2.11) from \cite{CoGaJe_21} with $\phi_0=\psi_0$, we get
\begin{multline*}
\langle K_h\phi,K_h\chi\rangle_{\Hh^{l}}=\langle \phi, J_h  K_h\chi\rangle_{\HH^{l}}=\langle\F \phi,\F J_h  K_h\F^*\F\chi\rangle_{\hat\HH^{l}}\\
=(2\pi)^m\int_{\R^m}\hat\psi_0(h\xi)\overline{(\F\phi)(\xi)}\cdot\sum_{j\in\Z^m}\overline{\hat\psi_0(h\xi+2\pi j)}(\F\chi)(\xi+2\pi j/h) \dd\xi\\
=(2\pi)^m\int_{[-\frac{3\pi}{2h},\frac{3\pi}{2h}]^m}\hat\psi_0(h\xi)\overline{(\F\phi)(\xi)}\cdot\sum_{|j|\leq 1}\overline{\hat\psi_0(h\xi+2\pi j)}(\F\chi)(\xi+2\pi j/h) \dd\xi, 
\end{multline*}
for any $\phi,\chi\in\HH^l$. For $\xi\in M_h:=[-\frac{\pi}{2h},\frac{\pi}{2h}]^m$, even the terms with $|j|=1$ vanish and, due to the identity (2.8) from \cite{CoGaJe_21}, $|\hat\psi_0(h\xi)|^2=(2\pi)^{-m}$. Therefore, we may write
\begin{multline*}
(2\pi)^m\int_{M_h}\hat\psi_0(h\xi)\overline{(\F\phi)(\xi)}\cdot\sum_{|j|\leq 1}\overline{\hat\psi_0(h\xi+2\pi j)}(\F\chi)(\xi+2\pi j/h) \dd\xi\\
=\int_{M_h} \overline{(\F\phi)(\xi)}\cdot (\F\chi)(\xi) \dd\xi.
\end{multline*} 
By the dominated convergence theorem, we have
\begin{multline*}
\lim_{h\to 0+}\int_{M_h} \overline{(\F\phi)(\xi)}\cdot (\F\chi)(\xi) \dd\xi=\int_{\R^m} \overline{(\F\phi)(\xi)}\cdot (\F\chi)(\xi) \dd\xi\\
=\langle\F\phi,\F\chi\rangle_{\hat\HH^{l}}=\langle\phi,\chi\rangle_{\HH^{l}}.
\end{multline*}
On the other hand, 
\begin{multline*}
\Big|\int_{[-\frac{3\pi}{2h},\frac{3\pi}{2h}]^m\setminus M_h}\hat\psi_0(h\xi)\overline{(\F\phi)(\xi)}\cdot\sum_{|j|\leq 1}\overline{\hat\psi_0(h\xi+2\pi j)}(\F\chi)(\xi+2\pi j/h) \dd\xi\Big| \\
\leq C \sum_{|j|\leq 1}\int_{\R^m\setminus M_h}|(\F\phi)(\xi)\cdot (\F\chi)(\xi+2\pi j/h)| \dd\xi\\
 \leq C\|\F\phi\|_{L^2(\R^m\setminus M_h;\C^{l})}\|\F\chi\|_{\hat{\HH}^{l}}.
\end{multline*}
Since $\F\phi\in \hat{\HH}^{l}=L^2(\R^m;\C^{l})$, $\lim_{h\to 0+}\|\F\phi\|_{L^2(\R^m\setminus M_h;\C^{l})}=0$. We conclude that
\begin{equation*}
\lim_{h\to 0+}\langle K_h\phi,K_h\chi\rangle_{\Hh^{l}}=\langle\phi,\chi\rangle_{\HH^{l}}.
\end{equation*}
\end{proof}

\begin{corollary} \label{cor:slim}
Let $\phi_0=\psi_0$. Then $\slim_{h\to 0+}J_h K_h=I_{\HH^l}$.
\end{corollary}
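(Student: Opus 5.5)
We must show that $J_hK_h f\to f$ for every $f\in\HH^l$. When $\phi_0=\psi_0$, the operator $P_h:=J_hK_h$ is an orthogonal projection; this was noted at the end of Subsection~\ref{subsec:JK_operators}. Hence $\|P_h\|\le 1$, and
\begin{equation*}
\|(I_{\HH^l}-P_h)f\|^2_{\HH^l}=\langle f,f\rangle_{\HH^l}-\langle f,P_hf\rangle_{\HH^l}.
\end{equation*}
This identity uses $(I-P_h)^*(I-P_h)=I-P_h$, which holds because $P_h$ is self-adjoint and idempotent.

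Next, $K_h=J_h^*$ gives $\langle f,J_hK_hf\rangle_{\HH^l}=\langle K_hf,K_hf\rangle_{\Hh^l}$. Proposition~\ref{prop:resolution} allows $\Sub=\HH^l$, so we may apply it with $\phi=\chi=f$ and obtain
\begin{equation*}
\lim_{h\to0+}\langle K_hf,K_hf\rangle_{\Hh^l}=\|f\|^2_{\HH^l}.
\end{equation*}
Combining this with the identity above yields $\|(I-P_h)f\|^2\to 0$ for every $f$, which is the claim.

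The only point needing care is the fact that $P_h$ is an orthogonal projection. It follows from $K_hJ_h=I_{\Hh^l}$ (biorthogonality, here orthonormality) together with $K_h=J_h^*$: indeed $P_h^2=J_h(K_hJ_h)K_h=P_h$ and $P_h^*=(J_hJ_h^*)^*=P_h$. Both facts are already recorded in the paper, so no genuine obstacle remains. Even without the projection property, one can expand $\|f-J_hK_hf\|^2=\|f\|^2-2\,\mathrm{Re}\,\langle f,J_hK_hf\rangle+\|J_hK_hf\|^2$. Since $J_h$ is an isometry, $\|J_hK_hf\|^2=\|K_hf\|^2$, and Proposition~\ref{prop:resolution} shows that each term tends to $\|f\|^2$, so the expression tends to $0$.
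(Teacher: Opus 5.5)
Your proof is correct and follows the paper's argument: the paper likewise combines the fact that $J_hK_h$ is an orthogonal projection (when $\phi_0=\psi_0$) with Proposition~\ref{prop:resolution}, read as weak convergence of $J_hK_h$ to $I_{\HH^l}$. Your identity $\|(I_{\HH^l}-J_hK_h)f\|^2=\|f\|^2-\|K_hf\|^2_{\Hh^l}$ simply spells out the one-line step in the paper, namely why weak convergence of orthogonal projections to the identity upgrades to strong convergence.
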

\begin{proof}

Since $J_h K_h$ is an orthogonal projection, when $\phi_0=\psi_0$, its weak convergence to $I_{\HH^l}$, which is equivalent to Proposition \ref{prop:resolution}, immediately implies also the strong convergence to $I_{\HH^l}$.
\end{proof}

\begin{proposition} \label{prop:conv_relation}
Let $\phi_0=\psi_0$ and $R_h,\,h>0,$ and $R_0$ be bounded operators on $\Hh^{l}$ and $\HH^{l}$, respectively. If $R_h$ is normal for $h\geq 0$, the norms $\|R_h\|$ are eventually bounded, i.e., bounded for all sufficiently small $h>0$, and $R_h$ converges to $R_0$ in the generalized strong sense, i.e.,
\begin{equation} \label{eq:strong_conv_Rh}
\lim_{h\to 0+} \|(J_h R_h K_h-R_0)\chi\|_{\HH^{l}}=0 \quad (\forall \chi\in\HH^{l}),
\end{equation}
then $\wtlim_{h\to 0}R_h=R_0$ in the sense of Definition \ref{def:op_conv} with the inductive resolution described in Proposition \ref{prop:resolution}. If the latter is true, then $R_{h}$ converges to $R_0$ in the generalized weak sense, i.e.
\begin{equation} \label{eq:gen_weak}
\lim_{h\to 0+}\langle\phi, (J_{h} R_{h} K_{h}-R_0)\chi\rangle_{\HH^{l}}=0 \quad (\forall\phi,\chi\in\HH^{l})
\end{equation}
and $R_{h}^*$ converges to $R_0^*$ in the generalized strong sense, i.e.
\begin{equation} \label{eq:gen_strong_adj}
\lim_{h\to 0+}\|(J_{h} R_{h}^* K_{h}-R_0^*)\chi\|_{\HH^{l}}=0 \quad (\forall\chi\in\HH^{l}).
\end{equation}
If, in addition, $(R_{h})_{h\geq 0}$ are normal, then  $R_{h}$ converges to $R_0$ in the generalized strong sense, too.
\end{proposition}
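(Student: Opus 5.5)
The plan uses three facts valid for $\phi_0=\psi_0$: $K_h=J_h^*$, $K_hJ_h=I_{\Hh^l}$, and $J_h$ is an isometry. With these, the Barker pairing becomes $\langle K_h\phi,\chi_h\rangle_{\Hh^l}=\langle\phi,J_h\chi_h\rangle_{\HH^l}$. By Proposition \ref{prop:resolution} we may take $\Sub=\HH^l$, so $\wtlim_{h\to0}\chi_h=\chi$ means exactly that $\|\chi_h\|$ is eventually bounded and $J_h\chi_h\to\chi$ weakly in $\HH^l$. In particular $\wtlim_{h\to0}K_h\chi=\chi$, since $J_hK_h\to I$ by Corollary \ref{cor:slim}. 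Throughout, I will use the following norm expansion, valid for a bounded $T_h$ on $\Hh^l$ and a bounded $T$ on $\HH^l$:
\begin{equation*}
\|(J_hT_hK_h-T)\chi\|^2=\|T_hK_h\chi\|^2-2\,\mathrm{Re}\,\langle T_hK_h\chi,K_hT\chi\rangle+\|T\chi\|^2 .
\end{equation*}

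\emph{First implication (generalized strong $\Rightarrow$ Barker).} The first step is to show that \eqref{eq:strong_conv_Rh} together with normality gives strong convergence of the adjoints; this is where normality enters. Apply the expansion with $T_h=R_h^*$, $T=R_0^*$. Normality of $R_h$ gives
\begin{equation*}
\|R_h^*K_h\chi\|=\|R_hK_h\chi\|=\|J_hR_hK_h\chi\|\to\|R_0\chi\|=\|R_0^*\chi\|,
\end{equation*}
where the last equality is normality of $R_0$. The cross term equals $\langle\chi,J_hR_hK_hR_0^*\chi\rangle\to\langle\chi,R_0R_0^*\chi\rangle=\|R_0^*\chi\|^2$, so the expansion tends to zero. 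Now let $\wtlim\chi_h=\chi$ and $\phi\in\HH^l$. Using $K_hJ_h=I$, write
\begin{equation*}
\langle K_h\phi,R_h\chi_h\rangle=\langle R_h^*K_h\phi,K_hJ_h\chi_h\rangle=\langle J_hR_h^*K_h\phi,J_h\chi_h\rangle .
\end{equation*}
The left factor converges strongly to $R_0^*\phi$ and $J_h\chi_h\to\chi$ weakly, so the limit is $\langle\phi,R_0\chi\rangle$. Eventual boundedness of $\|R_h\chi_h\|$ follows from that of $\|R_h\|$ and $\|\chi_h\|$.

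\emph{Second implication (Barker $\Rightarrow$ \eqref{eq:gen_weak} and \eqref{eq:gen_strong_adj}).} For \eqref{eq:gen_weak}, take $\chi_h=K_h\chi$; then $\langle K_h\phi,R_hK_h\chi\rangle=\langle\phi,J_hR_hK_h\chi\rangle\to\langle\phi,R_0\chi\rangle$. For \eqref{eq:gen_strong_adj}, the key observation is that $\wtlim R_h^*K_h\chi=R_0^*\chi$. Indeed, $\langle K_h\phi,R_h^*K_h\chi\rangle$ is the complex conjugate of $\langle K_h\chi,R_hK_h\phi\rangle$, which tends to $\langle\chi,R_0\phi\rangle$; hence the original pairing tends to $\langle\phi,R_0^*\chi\rangle$, and the norms are bounded. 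Applying Barker convergence of $R_h$ once more, to this net, gives $\wtlim R_hR_h^*K_h\chi=R_0R_0^*\chi$. In the expansion with $T_h=R_h^*$:
\begin{itemize}
\item the first term $\|R_h^*K_h\chi\|^2=\langle K_h\chi,R_hR_h^*K_h\chi\rangle$ tends to $\|R_0^*\chi\|^2$;
\item the cross term $\langle R_h^*K_h\chi,K_hR_0^*\chi\rangle$ tends to $\|R_0^*\chi\|^2$ by the observation with $\phi=R_0^*\chi$.
\end{itemize}
So the expansion tends to zero. No normality is needed here.

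\emph{Final claim.} If all $R_h$ are normal, I would redo the norm computation of the first step with the roles of $R_h$ and $R_h^*$ swapped. Now $\|R_hK_h\chi\|=\|R_h^*K_h\chi\|\to\|R_0^*\chi\|=\|R_0\chi\|$ by \eqref{eq:gen_strong_adj} and normality of $R_0$. The cross term $\langle\chi,J_hR_h^*K_hR_0\chi\rangle$ tends to $\|R_0\chi\|^2$. This yields \eqref{eq:strong_conv_Rh}.

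The main obstacle is the adjoint step in the first implication. Generalized strong convergence alone does not pass to adjoints, so the proof hinges on normality turning $\|R_h^*K_h\chi\|$ into a quantity controlled by \eqref{eq:strong_conv_Rh}. Equally essential is the identity $\|J_hR_h^*K_h\chi\|=\|R_h^*K_h\chi\|$, which uses that $J_h$ is an isometry, i.e. the standing assumption $\phi_0=\psi_0$.
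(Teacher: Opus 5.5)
Your proof is correct. For the first implication, for \eqref{eq:gen_weak}, and for the final claim you follow the paper's line. The only difference there is that you prove inline, via your norm expansion, the standard fact that strong convergence of normal operators passes to the adjoints. The paper simply quotes that fact after observing that $J_hR_hK_h$ is normal.

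For \eqref{eq:gen_strong_adj}, however, your route is genuinely different. The paper argues by contradiction:
\begin{itemize}
\item it assumes $\|(R_{h_n}^*K_{h_n}-K_{h_n}R_0^*)\psi\|\geq\varepsilon$ along a sequence $h_n\to0$ and normalizes this vector;
\item it uses weak compactness to extract a subsequence along which $J_{h_n}\chi_{h_n}$ converges weakly;
\item it feeds this worst-case sequence into the Barker hypothesis, which forces the norm to zero;
\item it finally removes $J_hK_h$ with Corollary \ref{cor:slim}.
\end{itemize}

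You instead say explicitly which nets to test. From \eqref{eq:gen_weak} you obtain $\wtlim R_h^*K_h\chi=R_0^*\chi$. Applying the Barker hypothesis to this very net gives $\|R_h^*K_h\chi\|^2=\langle K_h\chi,R_hR_h^*K_h\chi\rangle\to\|R_0^*\chi\|^2$, i.e.\ norm convergence. Weak plus norm convergence then yields strong convergence through your exact expansion.

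What your route buys:
\begin{itemize}
\item It is direct and constructive, with no compactness and no passage to subsequences. Since Barker's definition is formulated for nets over all $h>0$, the paper's subsequence step tacitly requires extending the sequence to such a net, e.g.\ by $K_h\chi$ off the sequence.
\item The Barker hypothesis is only ever tested on the explicit nets $K_h\chi$ and $R_h^*K_h\chi$.
\item A single identity handles all three strong-convergence steps.
\end{itemize}
The paper's contradiction argument, in turn, is a generic template that works without having to find the right test net.
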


\begin{proof}
Take any net $(\chi_h)_{h>0}$ such that $\wtlim_{h\to 0}\chi_h=\chi$ and any $\phi\in\HH^{l}$.  Using $K_hJ_h=I_{\Hh^{l}}$, we obtain
\begin{multline} \label{eq:R_hR_0_bound}
|\langle K_h\phi,R_h \chi_h\rangle_{{\Hh^l}}-\langle\phi,R_0\chi\rangle_{\HH^{l}}|=|\langle \phi,J_h R_h K_h J_h\chi_h-R_0\chi\rangle_{\HH^{l}}|\\
\leq |\langle \phi,(J_h R_h K_h-R_0) J_h\chi_h\rangle_{\HH^{l}}|+|\langle\phi, R_0 J_h \chi_h-R_0\chi\rangle_{\HH^{l}}|.
\end{multline}
The first term of this bound can be further estimated as follows,
\begin{multline} \label{eq:1st_term_bound}
|\langle \phi,(J_h R_h K_h-R_0) J_h\chi_h\rangle_{\HH^{l}}|=|\langle (J_h R_h^* K_h-R_0^*)\phi, J_h\chi_h\rangle_{\HH^{l}}|\\
\leq \|(J_h R_h^* K_h-R_0^*)\phi\|_{\HH^{l}}\|J_h\|\|\chi_h\|_{\Hh^{l}}\leq C\|(J_h R_h^* K_h-R_0^*)\phi\|_{\HH^{l}}.
\end{multline}
Here, we used the uniform boundedness of $\|J_h\|$ and $\|\chi_h\|_{\Hh^{l}}$ for $h>0$. Since for a normal $R_h$, $J_h R_h^* K_h$ is normal, too, and a sequence of normal operators converges strongly to a normal operator if and only if the same is true for the corresponding adjoints,
\begin{equation} \label{eq:adjoint_conv}
\lim_{h\to 0+} \|(J_h R_h^* K_h-R_0^*)\phi\|_{\HH^{l}}=0.
\end{equation}
The second term on the right-hand side of \eqref{eq:R_hR_0_bound} obeys
\begin{equation*}
\lim_{h\to 0+}|\langle\phi, R_0 J_h \chi_h-R_0\chi\rangle_{\HH^{l}}|=\lim_{h\to 0+}|\langle K_h R_0^*\phi, \chi_h\rangle_{\Hh^{l}}-\langle R_0^*\phi,\chi\rangle_{\HH^{l}}|=0,
\end{equation*}
because $\wtlim_{h\to 0}\chi_h=\chi$. Putting this, \eqref{eq:R_hR_0_bound}, \eqref{eq:1st_term_bound}, and \eqref{eq:adjoint_conv} together, we arrive at $\wtlim_{h\to 0}R_h\chi_h=R_0\chi$.

To show \eqref{eq:gen_weak}, we just write
\begin{equation*}
\langle\phi, J_h R_h K_h\chi\rangle_{\HH^{l}}=\langle K_h\phi, R_h K_h\chi\rangle_{\Hh^{l}}
\end{equation*}
and use the fact that
\begin{equation*}
\lim_{h\to 0+} \langle K_h\phi, R_h K_h\chi\rangle_{\Hh^{l}}=\langle\phi, R_0\chi\rangle_{\HH^{l}}.
\end{equation*}

Now, we will prove that convergence in the Barker sense implies generalized strong convergence of the adjoints. By definition, $\wtlim_{h\to 0}R_{h}=R_0$ means that for every net $(\chi_{h})_{h>0}$ such that $\chi_{h}\in\HH_{h}^l$ and $\wlim_{h\to 0+}J_{h}\chi_{h}=\chi$ in $\HH^l$ it holds
\begin{equation} \label{eq:wlim}
\wlim_{h\to 0+}J_{h}R_{h}K_{h}(J_{h}\chi_{h})=R_0\chi.
\end{equation}
First, we will show that 
\begin{equation} \label{eq:strong_neg}
\lim_{h\to 0+}\|(R_{h}^*K_{h}-K_{h}R_0^*)\chi\|_{\HH_{h}^l}=0 \quad (\forall \chi\in\HH^l).
\end{equation}
For contradiction, assume that \eqref{eq:strong_neg} is not satisfied, i.e., there exist $\psi\in\HH^l,\, \varepsilon>0,$ and a positive sequence $(\hn)_{n\in\N}$ converging to zero such that for all $h_n$ in this sequence we have
\begin{equation} \label{eq:negation}
\|J_{\hn}(R_{\hn}^*K_{\hn}-K_{\hn}R_0^*)\psi\|_{\HH^l}=\|(R_{\hn}^*K_{\hn}-K_{\hn}R_0^*)\psi\|_{\HH_{\hn}^l}\geq\varepsilon.
\end{equation}
Here, in the first equality we used the fact that $J_{\hn}$ is an isometry. For the same reason, the sequence $(\tilde\chi_{\hn})$ defined by
\begin{equation*}
\tilde\chi_{\hn}:=J_{\hn}\chi_{\hn} \quad \text{with} \quad \chi_{\hn}:=\frac{(R_{\hn}^*K_{\hn}-K_{\hn}R_0^*)\psi}{\|(R_{\hn}^*K_{\hn}-K_{\hn}R_0^*)\psi\|_{\HH_{\hn}^l}} 
\end{equation*}
is bounded in $\HH^l$. Therefore, we can find a subsequence of $(h_n)$, denoted by the same symbol, such that $\wlim_{n\to\infty}\tilde\chi_{\hn}=\wlim_{n\to\infty}J_{\hn}\chi_{\hn}=\chi\in\HH^l$. By our assumption, we get \eqref{eq:wlim}. In particular, we have
\begin{equation*}
\lim_{n\to\infty}\langle J_{\hn}R_{\hn}K_{\hn}\tilde\chi_{\hn}, \psi\rangle_{\HH^l}=\langle R_0\chi,\psi\rangle_{\HH^l},
\end{equation*}
which is equivalent to
\begin{equation} \label{eq:wlim1}
\lim_{n\to\infty}\langle \tilde\chi_{\hn}, J_{\hn}R_{\hn}^*K_{\hn}\psi\rangle_{\HH^l}=\langle \chi,R_0^*\psi\rangle_{\HH^l}.
\end{equation}
Moreover, $\wlim_{n\to\infty}\tilde\chi_{\hn}=\chi$ implies that
\begin{equation} \label{eq:wlim2}
\lim_{n\to\infty}\langle\tilde\chi_{\hn},R_0^*\psi\rangle_{\HH^l}=\langle\chi,R_0^*\psi\rangle_{\HH^l}.
\end{equation}
Subtracting \eqref{eq:wlim1} from \eqref{eq:wlim2} we arrive at
\begin{equation*}
\lim_{n\to\infty}\langle \tilde\chi_{\hn}, (J_{\hn}R_{\hn}^*K_{\hn}-R_0^*)\psi\rangle_{\HH^l}=0.
\end{equation*}
Substituting for $\tilde\chi_{\hn}$, we get
\begin{multline*}
\langle \tilde\chi_{\hn}, (J_{\hn}R_{\hn}^*K_{\hn}-R_0^*)\psi\rangle_{\HH^l}=\langle \chi_{\hn}, (K_{\hn}J_{\hn}R_{\hn}^*K_{\hn}-K_{\hn}R_0^*)\psi\rangle_{\HH_{\hn}^l}\\
=\|(R_{\hn}^*K_{\hn}-K_{\hn}R_0^*)\psi\|_{\HH_{\hn}^l}.
\end{multline*}
Consequently, $\lim_{n\to \infty}\|(R_{\hn}^*K_{\hn}-K_{\hn}R_0^*)\psi\|_{\HH_{\hn}^l}=0$ which contradicts \eqref{eq:negation}.

Since $J_{h}$ is an isometry, \eqref{eq:strong_neg} yields
\begin{equation*} 
\lim_{h\to 0+}\|(J_{h}R_{h}^*K_{h}-J_{h}K_{h}R_0^*)\chi\|_{\HH^l}=0 \quad (\forall \chi\in\HH^l),
\end{equation*}
which together with Corollary \ref{cor:slim} implies \eqref{eq:gen_strong_adj}. In the case when $(R_{h})_{h\geq 0}$  are normal, the generalized strong convergence of $R_{h}$ follows by the same argument that has been used to conclude \eqref{eq:adjoint_conv}.
\end{proof}

\begin{remark} If instead of a net $(R_h)_{h>0}$ we deal with a sequence of bounded operators $(R_{h_n})_{n\in\N}$, where $h_n>0$ and $\lim_{n\to\infty}h_n=0$, the assumption on the eventual boundedness of $\|R_{h_n}\|$ in the first statement of Proposition \ref{prop:conv_relation} may be dropped, as the uniform boundedness principle yields that \eqref{eq:strong_conv_Rh} implies that the norms $\|J_{h_n} R_{h_n} K_{h_n}\|$ are bounded uniformly in $n$. Since  the same holds true for the norms  of $J_{h_n}$ and $K_{h_n}$, we get from
\begin{equation*}
\|R_{h_n}\|=\|K_{h_n} J_{h_n} R_{h_n} K_{h_n} J_{h_n}\|\leq \|K_{h_n}\| \|J_{h_n} R_{h_n} K_{h_n}\| \|J_{h_n}\|
\end{equation*}
that the norms $\|R_{h_n}\|$ are also uniformly bounded.
\end{remark}

\begin{remark}
In the proof of Proposition \ref{prop:conv_relation}, the particular form of $K_h$ and $J_h$ or the choice of the particular Hilbert spaces play no role; we only used the facts that 
\begin{equation} \label{eq:abstract_ass}
J_h^*=K_h,\, (J_h)_{h>0} \text{ are isometries},\,  K_hJ_h=I,\text{ and } \slim_{h\to 0+} J_hK_h=I.
\end{equation}
Therefore, even under these abstract assumptions, the convergence in the Barker sense (with $\res_h=K_h$) of normal operators to a normal limit  is equivalent to generalized strong convergence. 

Let us look at two examples. For both of them, we put $\GG=\Gn$ and $\res_n=K_n=I_{\GG}$ for all $n\in\N$. Then \eqref{eq:abstract_ass}, with the obvious modifications for sequences, is trivially satisfied and the sequence of bounded operators $R_n$ obeys $\wtlim_{n\to\infty}R_n=R_\infty$ if and only if $\|R_n\|$ is uniformly bounded and for every sequence $(\chi_n)$ such that $\wlim_{n\to\infty}\chi_n=\chi$ it holds that $\wlim_{n\to\infty} R_n \chi_n=R_\infty\chi$. Moreover, the generalized weak and strong operator convergence are equivalent to the usual weak and strong operator convergence, respectively. 

%First, take $\GG=\ell^2(\N)$ and 
%\begin{equation*}
%(R_n\psi)^j=\begin{cases}
%\psi^n & j=1, \\
%\psi^1 & j=n, \\
%0 & \text{otherwise}.
%\end{cases}
%\end{equation*}
%Here, $\psi^j$ stands for the $j$-th component of $\psi$. It is straightforward to check that $R_n$ is bounded and self-adjoint, $\wlim_{n\to\infty} R_n=0$ but $\wtlim_{n\to\infty} R_n$ does not exist. To see the latter, it is sufficient to choose $(\chi_n)^j=\delta_n^j$. Then $\wlim_{n\to\infty}\chi_n=0$ but $\lim_{n\to\infty}\langle\psi,R_n \chi_n\rangle=\overline{\psi^1}$, which is generally non-zero. This shows that even in the self-adjoint setting, weak convergence generally does not  imply convergence in the sense of Barker. 

First, take $\GG=\ell^2(\N)$ and
\begin{equation*}
%\label{eq:example_2}
(R_n\psi)^j=\begin{cases}
0 & j\leq n,\\
\psi^{j-n} & j>n,
\end{cases}
\end{equation*}
where $\psi^j$ stands for the $j$-th component of $\psi$.
Note that $R_n$ is not normal, $\|R_n\psi\|=\|\psi\|$, and $(R_n^*\psi)^j=\psi^{j+n}$. Using 
\begin{equation*}
|\langle\psi,R_n\chi_n\rangle|=|\langle R_n^*\psi,\chi_n\rangle|\leq\|R_n^*\psi\|\|\chi_n\|
\end{equation*}
together with 
\begin{equation} \label{eq:ex2}
\slim_{n\to\infty} R_n^*=0
\end{equation}
and the fact that $\|\chi_n\|$ is bounded for any weakly convergent sequence $(\chi_n)$, we infer that $\wtlim_{n\to \infty}R_n=0$. On the other hand, $\slim_{n\to\infty}R_n\neq 0$. This example demonstrates that for non-normal operators, convergence in the Barker sense generally does not imply strong convergence. 

Second, we consider the sequence $(R^*_n)_{n\in\N}$, for which we already have \eqref{eq:ex2}. For $(\chi_n)^j=\delta_n^{j-1}$ we get $\wlim_{n\to\infty}\chi_n=0$, but $\wtlim_{n\to\infty}R^*_n$ does not exist, because
$\langle\psi,R_n^*\chi_n\rangle=\langle R_n\psi,\chi_n\rangle=\overline{\psi^1}$ which is generally non-zero. Therefore, for non-normal operators, strong convergence does not necessarily imply convergence in the Barker sense.
\end{remark}

\begin{corollary} 
Whenever a net of self-adjoint operators converges in the generalized norm resolvent sense, like in Theorem \ref{theo:main}, the net of resolvents themselves converges in the sense of Barker, cf. Definition \ref{def:op_conv}.
\end{corollary}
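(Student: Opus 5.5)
The plan is to reduce the corollary to Proposition \ref{prop:conv_relation}. Throughout, take $\phi_0=\psi_0$, so that $K_h=J_h^*$ and Proposition \ref{prop:resolution} supplies the inductive resolution $(\Hh^{2d},K_h)_{h>0}$. Put $R_h:=(\tilde H_h\pm\iu)^{-1}$ (or $(H_h^+\pm\iu)^{-1}$) for $h>0$ and $R_0:=(H_0\pm\iu)^{-1}$.

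First I check the hypotheses of Proposition \ref{prop:conv_relation}. Each $R_h$ and $R_0$ is the resolvent of a self-adjoint operator at a non-real point, so it is bounded and normal, with $\|R_h\|\le 1$ by the spectral theorem. Hence the norms are uniformly, and in particular eventually, bounded. Generalized norm resolvent convergence, as in Theorem \ref{theo:main} (or in Weidmann's sense with $\iota_h=J_h$ and $\iota_\infty=I$, cf. Corollary \ref{cor:different_convergence}), gives $\|J_hR_hK_h-R_0\|\to0$. This trivially implies the generalized strong convergence \eqref{eq:strong_conv_Rh}, since
\[
\|(J_hR_hK_h-R_0)\chi\|\le\|J_hR_hK_h-R_0\|\,\|\chi\|\quad(\forall\chi\in\HH^{2d}).
\]

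The first statement of Proposition \ref{prop:conv_relation} then yields $\wtlim_{h\to0}R_h=R_0$, i.e. convergence of the resolvents in the sense of Barker, which is the claim.

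The only point needing care is the framework. If the convergence is phrased abstractly with general identification operators rather than the specific $J_h$ and $K_h$, I would invoke the remark after Proposition \ref{prop:conv_relation}. That proof uses only the properties \eqref{eq:abstract_ass}: $J_h^*=K_h$, $J_h$ isometric, $K_hJ_h=I$, and $\slim J_hK_h=I$. The last of these is Corollary \ref{cor:slim} in our setting. So any Weidmann-type convergence with isometric identifications satisfying \eqref{eq:abstract_ass} is covered.
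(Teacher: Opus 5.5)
Your proposal is correct and follows the paper's intended argument: with $\phi_0=\psi_0$, the resolvents at $\pm\iu$ are normal with norm at most $1$, and norm convergence of $J_hR_hK_h-R_0$ trivially gives the generalized strong convergence \eqref{eq:strong_conv_Rh}, so the first statement of Proposition \ref{prop:conv_relation} yields Barker convergence. Your closing comment on the abstract assumptions \eqref{eq:abstract_ass} matches the paper's remark following that proposition.
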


\subsection{Convergence of operator functions and spectra} 

Once we know that a sequence of operators converges in the QUE-generalized norm resolvent sense, we can employ a variant of functional calculus \cite[Theorem 4.2.9]{Olaf}. In particular, in view of Corollary \ref{cor:different_convergence} we get the following result.

\begin{proposition} \label{prop:functional_calculus}
Let $(h_n)_{n\in\N}\subset[0,h_0)$ be a sequence such that $\lim_{n\to\infty}h_n=0$ and $f\in C([0,+\infty])$. Furthermore, let $H_0$, $\tilde H_{h_n}$, and $H^+_{h_n}$ be given by \eqref{eq:Hh0_def}, \eqref{eq:Hh_def}, and \eqref{eq:Hh+_def}, respectively.
    Under the assumptions of Theorem \ref{theo:main} and assuming that $H_0$, $\tilde H_{h_n}$, and $H^+_{h_n}$ are non-negative and that the generating functions $\phi_0,\, \psi_0$ used in \eqref{eq:generating_functions} are equal, we have, as $n\to\infty$, 
 \begin{equation*}
   f(\tilde H_{h_n}) \underset{\text{QUE}}{\overset{\text{gnrc}}{\to}} f(H_0)\quad \text{and} \quad f( H^+_{h_n}) \underset{\text{QUE}}{\overset{\text{gnrc}}{\to}} f(H_0)
 \end{equation*}
with convergence speed proportional to ${h}_n^{\min\{2, 2\gamma-\max\{\beta,0\}-1\}}$.
\end{proposition}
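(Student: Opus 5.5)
The plan is to combine Corollary~\ref{cor:different_convergence} with the functional calculus for QUE-generalized norm resolvent convergence from \cite[Theorem 4.2.9]{Olaf}. No new analysis of the symbols will be needed. The only real work is to put the data of Corollary~\ref{cor:different_convergence} into the exact form that \cite[Theorem 4.2.9]{Olaf} expects, keeping track of constants so that the speed stays proportional to $\delta_n=Ch_n^{\min\{2,2\gamma-\max\{\beta,0\}-1\}}$.

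\textbf{Step 1: the QUE data.} Since $\phi_0=\psi_0$, $J_{h_n}$ is an isometry and $K_{h_n}=J_{h_n}^*$. We take the identification operators $J_n:=J_{h_n}$ from $\mathscr{G}_n=\HH_{h_n}^{2d}$ to $\mathscr{G}_\infty=\HH^{2d}$. The conditions of Definition~\ref{def:QUE_convergence} are then checked as follows.
\begin{itemize}
\item $\|J_n\|=1$ holds exactly.
\item $I-J_n^*J_n=K_{h_n}J_{h_n}-I=0$ holds exactly.
\item $\|(I-J_nJ_n^*)(H_0\pm\iu)^{-1}\|\le Ch_n^\gamma$ is the estimate of the second term in \eqref{eq:res_triangle}. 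By \eqref{eq:beta_restriction2} this is dominated by $\delta_n$.
\item For the intertwining condition, $J_{h_n}K_{h_n}J_{h_n}=J_{h_n}$ gives
\[
(H_0\pm\iu)^{-1}J_n-J_n(A_n\pm\iu)^{-1}=\big((H_0\pm\iu)^{-1}-J_{h_n}(A_n\pm\iu)^{-1}K_{h_n}\big)J_{h_n}.
\]
This is bounded by $\delta_n$ through Theorem~\ref{theo:main}.
\end{itemize}
All of this holds at $z=\iu$ and at $z^*=-\iu$, for both $A_n=\tilde H_{h_n}$ and $A_n=H^+_{h_n}$.

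\textbf{Step 2: move the spectral point to where Post's calculus is formulated.} For non-negative operators, \cite{Olaf} works with the resolvent at $z=-1$. I would transfer the estimates from $z=\pm\iu$ to $z=-1$ using the resolvent identity
\[
R(-1)=R(\iu)+(-1-\iu)R(-1)R(\iu).
\]
Here $\|R(-1)\|\le1$ because the operators are non-negative. Applying the identity to both $A_n$ and $H_0$ and using $J_n^*J_n=I$ shows that the three QUE quantities at $-1$ are bounded by a fixed multiple of those at $\iu$. Hence $A_n$ and $H_0$ are $C\delta_n$-quasi-unitarily equivalent in the resolvent sense used in \cite{Olaf}. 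I expect this bookkeeping to be the main (though minor) obstacle. The definitions in \cite{Olaf} are phrased partly through quadratic forms and energy norms, so I would check that the resolvent version of that theorem, which Post also states, applies directly.

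\textbf{Step 3: apply the calculus and conclude.} \cite[Theorem 4.2.9]{Olaf} then gives, for $f\in C([0,+\infty])$, the estimates
\[
\|f(H_0)J_n-J_nf(A_n)\|\le \eta_f(\delta_n) \quad\text{and}\quad \|(I-J_nJ_n^*)f(H_0)\|\le \eta_f(\delta_n).
\]
The remaining QUE conditions for $f(A_n)$ and $f(H_0)$ follow from the same data, since $J_n^*J_n=I$ exactly. This yields $f(A_n)\underset{\text{QUE}}{\overset{\text{gnrc}}{\to}}f(H_0)$.

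The rate claimed in the statement, "proportional to $h_n^{\min\{2,2\gamma-\max\{\beta,0\}-1\}}$," is read off from the form of $\eta_f(\delta)$ given in that theorem, with an $f$-dependent constant of proportionality. I would cite the precise statement there for this dependence rather than reprove it.
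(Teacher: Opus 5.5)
You follow the paper's route. There, the proposition is obtained in one line from Corollary~\ref{cor:different_convergence} and \cite[Theorem 4.2.9]{Olaf}. Your Steps 1 and 2 are correct and make explicit what the paper leaves implicit: the QUE data hold with $J_n^*J_n=I$ exactly, and non-negativity transfers them from $z=\pm\iu$ to $z=-1$.

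The gap is in Step 3, exactly where you defer to the citation. No abstract QUE functional calculus can give a speed proportional to $\delta_n$ for every $f\in C([0,+\infty])$. Take $\mathscr{G}_n=\mathscr{G}_\infty=\C$, $J_n=1$, $A_n=a_n\ge0$ and $A_\infty=a\ge0$ with $|(a_n+1)^{-1}-(a+1)^{-1}|=\delta_n$, and $f(\lambda)=|(1+\lambda)^{-1}-(1+a)^{-1}|^{1/2}$. All QUE conditions hold at $z=-1$ with speed $\delta_n$, yet $|f(a)J_n-J_nf(a_n)|=\delta_n^{1/2}$. For merely continuous $f$, the Stone--Weierstrass argument behind Post's calculus yields only some $\eta_f(\delta_n)\to0$, governed by the modulus of continuity of $f$. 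A bound $C_f\delta_n$ needs extra structure, e.g.\ $f(\lambda)=p((1+\lambda)^{-1})$ with a polynomial $p$, or suitable holomorphic $f$. So the stated rate cannot be read off from the theorem.

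There is a second problem in Step 3: your estimate $\|(I-J_nJ_n^*)f(H_0)\|\le\eta_f(\delta_n)$ fails whenever $f(\infty)\neq0$. Write $f(H_0)=(f-f(\infty))(H_0)+f(\infty)I$. The first part does tend to zero after multiplication by $I-J_nJ_n^*$. However, $I-J_nJ_n^*$ is a non-zero orthogonal projection (recall $J_hK_h\neq I$), so $\liminf_n\|(I-J_nJ_n^*)f(H_0)\|\ge|f(\infty)|$. For the same reason, for real $f$ and $z=\pm\iu$ one has $\liminf_n\|(I-J_nJ_n^*)(f(H_0)-z)^{-1}\|\ge|f(\infty)-z|^{-1}>0$. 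Hence, with your $J_n$, the conditions of Definition~\ref{def:QUE_convergence} for the pair $f(A_n)$, $f(H_0)$ do not follow. What Post's calculus actually delivers is the intertwining estimate $\|f(H_0)J_n-J_nf(A_n)\|\to0$, and hence $\|f(A_n)-J_n^*f(H_0)J_n\|\to0$.

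Finally, the non-negativity hypothesis can never be met. Since $\mathrm{tr}\,G_0(\xi)=\mathrm{tr}\,G_{11}(\xi)-\mathrm{tr}\,G_{11}(\xi)=0$, the condition $G_0(\xi)\ge0$ would force $G_0(\xi)=0$, contradicting Assumptions~\ref{ass:1}(iv). The statement is therefore only vacuously true. The paper's one-line justification shares both weaknesses of your Step 3.
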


For a similar result  with $f$ bounded and measurable see \cite[Theorem 4.2.10]{Olaf}. For a related result for operators that are not non-negative or even self-adjoint and a holomorphic $f$ see \cite[Theorem 4.5.10]{Olaf}. Beware that in the case of non-self-adjoint operators, Definition \ref{def:QUE_convergence} has to be modified, cf. \cite[Definition 4.5.6]{Olaf}.

Convergence of spectra is another consequence of QUE-generalized norm resolvent convergence. A detailed discussion on this topic can be found in Sections 4.3 and 4.6 of \cite{Olaf}. In particular, from \cite[Theorems 4.3.3, 4.3.4, and 4.3.5]{Olaf} we immediately get the following result for non-negative self-adjoint operators.

\begin{proposition}
Let $(h_n)_{n\in\N}\subset[0,h_0)$ be a sequence such that $\lim_{n\to\infty}h_n=0$. Furthermore, let $H_0$, $\tilde H_{h_n}$, and $H^+_{h_n}$ satisfy all the assumptions of Proposition \ref{prop:functional_calculus} and let the generating functions $\phi_0,\, \psi_0$ used in \eqref{eq:generating_functions} be equal. Then 
\begin{enumerate}[label=(\alph*)]
\item $\lim_{n\to\infty}\overline{d}(\sigma(\tilde H_{h_n}),\sigma(H_0))=0$ and $\lim_{n\to\infty}\overline{d}(\sigma(H^+_{h_n}),\sigma(H_0))=0$,
\item  $\lim_{n\to\infty}\overline{d}(\sigma_{\textup{ess}}(\tilde H_{h_n}),\sigma_{\textup{ess}}(H_0))=0$ and $\lim_{n\to\infty}\overline{d}(\sigma_{\textup{ess}}(H^+_{h_n}),\sigma_{\textup{ess}}(H_0))=0$,
\item discrete spectrum is \emph{lower semi-continuous}, i.e.,\\
 $\lim_{n\to\infty}\overline{d}_{-}(\sigma_{\textup{d}}(\tilde H_{h_n}),\sigma_{\textup{d}}(H_0))=0$ and $\lim_{n\to\infty}\overline{d}_{-}(\sigma_{\textup{d}}(H^+_{h_n}),\sigma_{\textup{d}}(H_0))=0$.
\end{enumerate}
\end{proposition}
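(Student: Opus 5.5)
The plan is to deduce the three spectral statements directly from the abstract results of \cite[Theorems 4.3.3, 4.3.4, and 4.3.5]{Olaf}. Those results only need QUE-generalized norm resolvent convergence of non-negative self-adjoint operators at a common resolvent point, together with a convergence speed tending to zero. So the whole task reduces to checking their hypotheses for the sequences $(\tilde H_{h_n})_n$ and $(H^+_{h_n})_n$, with limit $H_0$.

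I will proceed in three steps.

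\begin{enumerate}
\item Record the standing structure. $H_0$ is self-adjoint in $\HH^{2d}$, and $\tilde H_{h_n}$, $H^+_{h_n}$ are bounded self-adjoint operators in $\HH_{h_n}^{2d}$ (see \eqref{eq:Hh_def} and \eqref{eq:Hh+_def}). By the hypotheses of Proposition \ref{prop:functional_calculus}, all of them are non-negative. The points $z=\pm\iu$ lie in every resolvent set, and $z^*$ is covered because both signs are included.
\item Invoke Corollary \ref{cor:different_convergence}. Since $\phi_0=\psi_0$ and the assumptions of Theorem \ref{theo:main} hold, it yields $\tilde H_{h_n}\to H_0$ and $H^+_{h_n}\to H_0$ in the QUE-generalized norm resolvent sense. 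The speed is $\delta_n=Ch_n^{\min\{2,2\gamma-\max\{\beta,0\}-1\}}$. This exponent is positive precisely because $2\gamma>\max\{\beta,0\}+1$, so $\delta_n\to0$. The identification operators are $J_{h_n}$, which are isometries, and $K_{h_n}=J_{h_n}^*$.
\item Apply the three cited theorems of \cite{Olaf} to each sequence. They give, respectively:
\begin{enumerate}[label=(\alph*)]
\item convergence of the whole spectra in the Hausdorff-type distance $\overline d$;
\item the same for the essential spectra;
\item the one-sided (lower semi-continuity) statement $\overline d_-$ for the discrete spectra.
\end{enumerate}
\end{enumerate}

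The one point that needs care is matching conventions. The metric $\overline d$ in \cite{Olaf} is defined on the compactified half-line $[0,\infty]$, typically through the transformation $\lambda\mapsto(1+\lambda)^{-1}$. Post's theorems are also phrased for the resolvent at $z=-1$, or for non-negative operators with a real resolvent point, rather than at $z=\pm\iu$.

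To bridge this, I would either cite Post's result that QUE-convergence at one non-real $z$ (together with $z^*$) implies QUE-convergence at every point of the common resolvent set, with speed changed only by a constant factor. Alternatively, I would note that Proposition \ref{prop:functional_calculus} with a suitable $f\in C([0,+\infty])$, for instance $f(\lambda)=(1+\lambda)^{-1}$, already gives the needed resolvent-type convergence at $-1$, since non-negativity places all spectra in $[0,\infty)$.

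Once this conversion is justified, which I expect to be the main obstacle, the proposition is a direct corollary. No new estimates are needed beyond Theorem \ref{theo:main}.
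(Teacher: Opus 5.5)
Your proposal follows the paper's own route: the paper also obtains the statement at once by feeding the QUE-generalized norm resolvent convergence of Corollary \ref{cor:different_convergence} into \cite[Theorems 4.3.3, 4.3.4, and 4.3.5]{Olaf}, with no new estimates. The paper leaves implicit the passage from $z=\pm\iu$ to the resolvent point used in \cite{Olaf}. Of your two remedies, prefer the first (independence of QUE-convergence from the choice of $z$ in the common resolvent set), because Proposition \ref{prop:functional_calculus} is itself derived from \cite[Theorem 4.2.9]{Olaf} and so depends on that same conversion.
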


\noindent Here, $\overline{d}$ stands for the "weighted" Hausdorff distance, which is defined as a usual Hausdorff distance but with respect to the metric $\rho:\, \rho(a,b):=|(1+a)^{-1}-(1+b)^{-1}|$ on $[0,+\infty]$. Similarly, $\overline{d}_{-}(A,B)$ stands for the maximal inside distance of $A$ to $B$ derived from the same metric. In particular, (c) means that every discrete eigenvalue of $H_0$ is approximated by discrete eigenvalues of $\tilde H_{h_n}$ and $H^+_{h_n}$, respectively. In fact, the total multiplicity of these discrete eigenvalue is preserved.

\begin{remark}
In our setting, $\sigma_{\textup{d}}(H_0)=\emptyset$. However, one can perturb $H_0$ by a multiplicative potential, that is discretized simply by sampling at the lattice points, to produce discrete eigenvalues. Adapting \cite[Proposition 4.3]{CoGaJe_21} and \cite[Section 7]{CoGaJe_23}, it is then possible to extend Theorem \ref{theo:main} to a certain class of perturbed operators.
\end{remark}

When dealing with operators that are not non-negative (or not even self-adjoint), we may use \cite[Theorem 4.6.4]{Olaf} which in our self-adjoint setting yields the following result.

\begin{proposition}
Let $(h_n)_{n\in\N}\subset[0,h_0)$ be a sequence such that $\lim_{n\to\infty}h_n=0$. Furthermore, let $H_0$, $\tilde H_{h_n}$, and $H^+_{h_n}$ be given by \eqref{eq:Hh0_def}, \eqref{eq:Hh_def}, and \eqref{eq:Hh+_def}, respectively. Finally, let
the assumptions of Theorem \ref{theo:main} be satisfied and the generating functions $\phi_0,\, \psi_0$ used in \eqref{eq:generating_functions} be equal. If $\lambda\in\sigma_{\textup{d}}(H_0)$ is a discrete eigenvalue of multiplicity $\mu$ and, for all $h$ sufficiently small, $\lambda\notin\sigma_{\textup{ess}}(\tilde H_{h_n})$ or $\lambda\notin\sigma_{\textup{ess}}(H^+_{h_n})$, then there exist $\mu$ discrete eigenvalues (counting multiplicities) $\lambda_{j}(h_n),\, j=1,\ldots,\mu,$ of $\tilde H_{h_n}$ or $H^+_{h_n}$, respectively, such that
$$|\lambda_{j}({h_n})-\lambda|=\mathcal{O}\big(h_n^{\min\{2, 2\gamma-\max\{\beta,0\}-1\}})\quad \text{ as } n\to\infty.$$
\end{proposition}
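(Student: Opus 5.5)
The plan is to feed the norm estimates of Theorem \ref{theo:main} into the abstract spectral result \cite[Theorem 4.6.4]{Olaf}, through the QUE framework of Corollary \ref{cor:different_convergence}. By that corollary, the assumption $\phi_0=\psi_0$ (so that $J_{h_n}$ is an isometry and $K_{h_n}=J_{h_n}^*$) gives
\[
\tilde H_{h_n}\underset{\text{QUE}}{\overset{\text{gnrc}}{\to}} H_0
\quad\text{and}\quad
H^+_{h_n}\underset{\text{QUE}}{\overset{\text{gnrc}}{\to}} H_0
\]
with identification operators $J_n=J_{h_n}$, with $z=\pm\iu$, and with speed $\delta_n=Ch_n^{\min\{2,2\gamma-\max\{\beta,0\}-1\}}$. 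Before invoking this, I would check the QUE conditions directly, since they are the hypotheses of \cite[Theorem 4.6.4]{Olaf}.
\begin{itemize}
\item The bound $\|J_n\|=1$ holds because $J_{h_n}$ is an isometry.
\item The term $(I-J_n^*J_n)(A_n-z)^{-1}$ vanishes because $K_hJ_h=I$.
\item For $\|(I-J_nJ_n^*)(H_0-z)^{-1}\|$, the $h^\gamma$ bound from the proof of Theorem \ref{theo:main} is dominated by $\delta_n$ thanks to \eqref{eq:beta_restriction2}.
\item For the intertwining term, I would write $(H_0-z)^{-1}J_n-J_n(A_n-z)^{-1}$ as
\[
\big[(H_0-z)^{-1}-J_n(A_n-z)^{-1}K_n\big]J_n ,
\]
again using $K_nJ_n=I$. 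Theorem \ref{theo:main} then bounds it by $\|J_n\|\,\delta_n$.
\end{itemize}

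Next, I would verify the spectral hypothesis of \cite[Theorem 4.6.4]{Olaf}. Fix $\lambda\in\sigma_{\textup d}(H_0)$ of multiplicity $\mu$, and choose a small circle $\Gamma$ around $\lambda$ that separates it from the rest of $\sigma(H_0)$. The assumption $\lambda\notin\sigma_{\textup{ess}}$ of the discrete operators is used here: combined with the norm resolvent convergence, it ensures that, for small $h_n$, $\Gamma$ lies in the resolvent set of $A_n$ and the resolvents of $A_n$ are uniformly bounded on $\Gamma$. To get this, I would transfer the estimate from $z=\pm\iu$ to all $z\in\Gamma$ by the first resolvent identity. Since the operators are self-adjoint, $\|(H_0-z)^{-1}\|$ is controlled by $\operatorname{dist}(z,\sigma(H_0))^{-1}$, and the transferred estimate carries the same $\delta_n$ up to a $\Gamma$-dependent constant.

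With these in hand, the Riesz projections
\[
P_n=-\frac{1}{2\pi\iu}\oint_\Gamma (A_n-z)^{-1}\,\dd z
\]
satisfy $\|J_nP_nK_n-P_\infty\|\le C\delta_n$. Because $J_n$ and $K_n$ are isometric, this forces $\operatorname{rank}P_n=\mu$ for small $h_n$, so the number of eigenvalues inside $\Gamma$, counted with multiplicity, is preserved. This is exactly the content of \cite[Theorem 4.6.4]{Olaf}.

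The main obstacle is the rate $|\lambda_j(h_n)-\lambda|=\mathcal O(\delta_n)$ rather than the weaker $\mathcal O(\delta_n^{1/\mu})$ that generic non-self-adjoint perturbation theory would give. I would use self-adjointness. Consider the compressions $P_n(A_n-\lambda)P_n$ on the $\mu$-dimensional range of $P_n$, transported to $\operatorname{Ran}P_\infty$ via $J_n$. These are Hermitian and $\mathcal O(\delta_n)$-close to $P_\infty(H_0-\lambda)P_\infty=0$. The closeness comes from writing $(A_n-\lambda)P_n$ as a contour integral of $(z-\lambda)(A_n-z)^{-1}$ and comparing it with the analogous integral for $H_0$. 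Weyl's inequality for Hermitian matrices then gives the linear rate. If \cite[Theorem 4.6.4]{Olaf} already contains this rate in the self-adjoint case, this step reduces to quoting it.
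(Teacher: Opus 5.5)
Your proposal is correct and follows the paper's route. You obtain QUE-generalized norm resolvent convergence with speed $\delta_n=Ch_n^{\min\{2,2\gamma-\max\{\beta,0\}-1\}}$ as in Corollary~\ref{cor:different_convergence}, verifying the QUE conditions directly from $K_{h_n}=J_{h_n}^*$, $K_{h_n}J_{h_n}=I$, the $h^\gamma$ bound and Theorem~\ref{theo:main}, and feed this into \cite[Theorem 4.6.4]{Olaf}. Two small points in your unpacking: the $\sigma_{\textup{ess}}$ hypothesis is not what places $\Gamma$ in $\rho(A_n)$ with uniform resolvent bounds, since this already follows from the estimate at $z=\pm\iu$ because $J_n(A_n-z)^{-1}J_n^*$ is a pseudo-resolvent; and the linear rate follows directly from $\max_j|\lambda_j(h_n)-\lambda|=\|(A_n-\lambda)P_n\|\le C\delta_n$, without compressions or Weyl's inequality.
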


\section*{Acknowledgement}
The authors thank \v{S}imon Kos for turning their attention to the papers of Laurence Barker. They also thank Olaf Post for fruitful discussions. M.T. and R.K. were supported by Grant Nos. 26-21940S and 23-07947S of the Czech Science Foundation, respectively.

\end{document}